\def\BibTeX{{\rm B\kern-.05em{\sc i\kern-.025em b}\kern-.08em
    T\kern-.1667em\lower.7ex\hbox{E}\kern-.125emX}}
\newtheorem{theorem}{Theorem}
\newtheorem*{theorem*}{Theorem}
\newtheorem{definition}[theorem]{Definition}
\newtheorem{corollary}[theorem]{Corollary}
\newcommand{\setx}{\ensuremath{\mathcal{X}}}
\newcommand{\sety}{\ensuremath{\mathcal{Y}}}
\newcommand{\setd}{\ensuremath{\mathcal{D}}}
\newcommand{\sets}{\ensuremath{\mathcal{S}}}
\newcommand{\pmax}{P_{\text{max}}}
\newcommand{\pavg}{P_{\text{avg}}}
\newcommand{\circlearrow}{}
\DeclareRobustCommand{\circlearrow}{%
  \mathrel{\vphantom{\rightarrow}\mathpalette\circle@arrow\relax}%
}
\newcommand{\circle@arrow}[2]{%
  \m@th
  \ooalign{%
    \hidewidth$#1\circ\mkern1mu$\hidewidth\cr
    $#1-$\cr}%
}
\DeclarePairedDelimiterX{\infdivx}[2]{(}{)}{%
  #1\;\delimsize\|\;#2%
}
\newcommand{\infdiv}{D\infdivx}
\DeclareMathOperator*{\pliminf}{p-liminf}
\DeclareMathOperator*{\plimsup}{p-limsup}
\begin{document}
\title{Secure Event-triggered Molecular Communication—Information Theoretic Perspective and Optimal Performance} 



\author{
\IEEEauthorblockN{}}
\author{
\IEEEauthorblockN{Wafa Labidi \IEEEauthorrefmark{1}\IEEEauthorrefmark{5}, Vida Gholamian\IEEEauthorrefmark{2}\IEEEauthorrefmark{5}, Yaning Zhao\IEEEauthorrefmark{2}\IEEEauthorrefmark{5}, Christian Deppe\IEEEauthorrefmark{2}\IEEEauthorrefmark{5} and 
Holger Boche\IEEEauthorrefmark{1}\IEEEauthorrefmark{4}\IEEEauthorrefmark{5}\IEEEauthorrefmark{6}\IEEEauthorrefmark{7} }
\vskip10pt
\IEEEauthorblockA{\IEEEauthorrefmark{1}Technical University of Munich, Chair of Theoretical Information Technology, Munich, Germany\\
\IEEEauthorrefmark{2}Technische Universität Braunschweig, Institute for Communications Technology, Brunswick, Germany\\
\IEEEauthorrefmark{4}Cyber Security in the Age of Large-Scale Adversaries–
Exzellenzcluster, Ruhr-Universit\"at Bochum\\
\IEEEauthorrefmark{5}BMFTR Research Hub 6G-life, Germany\\
\IEEEauthorrefmark{6}{\color{black}{ Munich Center for Quantum Science and Technology (MCQST) }}\\
\IEEEauthorrefmark{7} {\color{black}{Munich Quantum Valley (MQV)}} \\
Email: wafa.labidi@tum.de, v.gholamian-sefiddarboni@tu-bs.de, yaning.zhao@tu-bs.de, christian.deppe@tu-bs.de, boche@tum.de}
}

\maketitle


\begin{abstract}
 Molecular Communication (MC) is an emerging field of research focused on understanding how cells in the human body communicate and exploring potential medical applications. In theoretical analysis, the goal is to investigate cellular communication mechanisms and develop nanomachine-assisted therapies to combat diseases. Since cells transmit information by releasing molecules at varying intensities, this process is commonly modeled using Poisson channels. In our study, we consider a discrete-time Poisson channel (DTPC).
MC is often event-driven, making traditional Shannon communication an unsuitable performance metric. Instead, we adopt the identification framework introduced by Ahlswede and Dueck. In this approach, the receiver is only concerned with detecting whether a specific message of interest has been transmitted. Unlike Shannon transmission codes, the size of identification (ID) codes for a discrete memoryless channel (DMC) increases doubly exponentially with blocklength when using randomized encoding. This remarkable property makes the ID paradigm significantly more efficient than classical Shannon transmission in terms of energy consumption and hardware requirements.
Another critical aspect of MC, influenced by the concept of the Internet of Bio-NanoThings, is security. In-body communication must be protected against potential eavesdroppers. To address this, we first analyze the DTPC for randomized identification (RI) and then extend our study to secure randomized identification (SRI). We derive capacity formulas for both RI and SRI, providing a comprehensive understanding of their performance and security implications.
\end{abstract}

\section{Introduction}

Molecular Communication (MC) is a rapidly growing field of research in which communication occurs through molecular signaling \cite{nakano2012,farsad2016,pierobon2011noise}. Recent studies have begun exploring MC from an information-theoretic perspective. For instance, \cite{JW19} focuses on modeling, \cite{kuscu2019transmitter} examines modulation, and initial information-theoretic analyses can be found in \cite{gohari2016information, rose2019capacity}.

Bienau et al. \cite{bienau2025molecular} provide a broad overview of theoretical modeling and testbed development in molecular communication, highlighting key concepts and recent advancements. Complementing this, the survey in \cite{bhattacharjee2025exhaled} presents a comprehensive review of exhaled-breath modeling and analysis within the molecular communication framework, covering both fundamental theories and practical applications.

Experimental platforms using biocompatible carriers \cite{bartunik2023channel} and magnetic nanoparticles \cite{bartunik2023development,9493754} have demonstrated the feasibility of MC for healthcare and environmental applications. Accurate channel and noise models, especially for nonlinear and Poisson-based systems \cite{farsad2014channel}, are essential to understanding performance limits. In addition, intersymbol interference (ISI) remains a key challenge; mitigation strategies improving reliability and energy efficiency have been proposed in \cite{tepekule2015isi,7331288}.

In MC, a sender such as a cell, releases molecules into the medium at a certain rate (molecules per second) during a specific interval. These molecules propagate via diffusion and/or advection and may degrade due to enzymatic reactions, while the receiver (a nanomachine or another cell) detects information by counting the received molecules. Since molecular production and release are physically limited, constraints on emission rates must be incorporated into the system model. When the release, propagation, and reception of molecules are statistically independent, the received signal follows Poisson statistics for sufficiently large molecule counts. Thus, the discrete-time Poisson channel (DTPC) provides a natural model for such systems, with average and peak power constraints representing the limitations in molecule production and release \cite{jamali2019channel,gohari2016information,unterweger2018experimental}.

Beyond MC, the DTPC is also fundamental in optical communication, where the transmitter is a laser emitting discrete photons and the receiver is a photodetector recording photon arrivals \cite{shamaiCapacityAchieving}. This analogy further emphasizes the generality and analytical importance of the DTPC across communication domains.

Technological advances include microfluidic MIMO platforms \cite{7397863,10059136,9122607}, closed-loop testbeds \cite{brand2024closed,9252869}, and fully chemical synchronization mechanisms \cite{heinlein2024closing}, bringing MC systems closer to biological realism. In \cite{9793850}, a mathematical model and simulator for molecular communication across the blood–tissue barrier are presented, analyzing factors such as blood flow and emission duration to enhance biomedical Internet of Bio-NanoThings (IoBNT) applications. Furthermore, the increasing need for secure and event-triggered communication has become particularly pressing in IoBNT and medical contexts \cite{9538653,10707359}.

In this context, Haselmayr et al. \cite{haselmayr2019integration} explicitly emphasize secure MC as a future requirement for 6G networks, particularly for applications involving Poisson channels, healthcare, and nano-robotics. Communication systems for healthcare and nano-robotics applications are subject to special security and privacy requirements. It is not clear to the MC how post-quantum cryptography solutions can be implemented using synthetic biology due to limited signal processing performance. Therefore, the direct embedding of post-quantum security into direct physical communication is an interesting approach. These needs align with efforts to design secure identification codes and adaptive architectures suited to molecular environments.

The motivation for using identification theory in MC arises from the event-triggered nature of many MC applications, including drug delivery \cite{muller2004challenges}, cancer treatment \cite{jain1999transport}, and health monitoring \cite{nakano2014molecular}. 

Identification theory was first introduced in the context of communication systems by Ahlswede and Dueck in \cite{Idchannels, Idfeedback}, inspired by earlier work on communication complexity \cite{ja1985identification, yao1979some}. They showed that by changing the receiver’s objective, from decoding one of $2^{nC}$ messages to identifying one of $2^{2^{nC}}$ possible messages, communication efficiency can be dramatically improved. Here, the receiver’s goal is to decide whether a particular message was sent, while the sender remains unaware of which message the receiver is checking.

This identification approach extends to scenarios where multiple receivers observe the same transmitted sequence. However, to maintain a manageable identification error, the number of receivers must not grow excessively (i.e., it should remain linear in the block length). Ahlswede and Dueck's methodology relies on local randomization and hash functions, providing an information-theoretic foundation for identification-based communication.

Following their theoretical work, explicit identification codes were developed in \cite{verdu1993explicit, derebeyouglu2020performance, von2023identification}. Even in cases where receiver-side randomization is not feasible, notable performance gains can still be achieved \cite{SPBD21pc, SPBD21f}. For instance, in Gaussian \cite{SPBD21pc} and Poisson channels \cite{SPBD21p, SV23isi}, the number of identifiable messages scales as $n^{nR}$, demonstrating the potential of identification theory beyond traditional Shannon-based communication frameworks.

Furthermore, deterministic identification (DI) in MC has been analyzed in \cite{SV23b,SV23isi,SPBD21p}, where capacity limits for deterministic identification were derived.

Until now, DI \cite{10285119} has been the primary focus of analysis, as it has often been assumed that local randomization is not feasible in MC. However, randomization at the receiver can provide significant performance gains for message identification. If the identification transmitter is located outside the body or bloodstream \cite{khaloopour2022joint}, randomized identification (RI) becomes feasible and worth further exploration.

In this work, we investigate randomized identification (RI) over the DTPC. To the best of our knowledge, the RI capacity of the DTPC has not been studied before. We derive its RI capacity under both average and peak power constraints, which account for the sender’s limited molecule production and release rate.

 Quantitatively, deterministic identification (DI) achieves a super-exponential growth in the number of identifiable messages, $ M_{\mathrm{DI}} \sim 2^{n \log n R} $, whereas randomized identification (RI) achieves a double-exponential growth, $M_{\mathrm{RI}} \sim 2^{2^{n R}} $. This highlights the substantial advantage of randomization in terms of achievable identification rates.


We then extend our analysis to the DTPC with independent and identically distributed (i.i.d.) state sequences, where the random channel state is unknown to both the sender and the receiver. We show that, in this scenario, the RI capacity of the state-dependent DTPC remains unchanged and coincides with its RI capacity in the state-independent case.

Additionally, we explore the security aspects of the discrete-time Poisson wiretap channel (DTPWC) by analyzing its identification capacity. The wiretap channel was originally introduced by Wyner, who established the capacity for the degraded case in \cite{wyner75thewire}; this was later extended to the non-degraded case by Csiszár and Körner in \cite{csiszar2003broadcast}. The secure capacity of the DTPWC \cite{soltani2021degraded,shamai1990capacity,bloch2011physical} is predominantly derived under the assumption of a degraded channel and are based on the notion of \emph{weak secrecy}. In modern information-theoretic security, however, stronger secrecy notions, such as \emph{strong secrecy} and \emph{semantic secrecy}, the latter rooted in cryptographic paradigms, have gained broader acceptance. A detailed discussion of these enhanced secrecy criteria can be found in \cite{bellare2012semantic}.

For our analysis of the secure identification capacity of the degraded DTPWC, we adopt the strong secrecy criterion. As a result, it is essential to establish that the previously known capacity formula remains valid under this stronger secrecy requirement. We derive an explicit expression for the secure identification capacity, thereby shedding light on the fundamental limits of privacy-preserving molecular communication systems.


Although the analysis in this paper is theoretical, the parameters of the adopted Poisson model are consistent with realistic molecular communication environments. For instance, nano-transmitters can release molecules at rates ranging from $10^3$ to $10^6$ molecules per second, while diffusion coefficients for small molecules in aqueous media are typically between $10^{-9}$ and $10^{-10},\text{m}^2/\text{s}$. Moreover, enzymatic degradation and biochemical reactions influence the effective lifetime of signaling molecules. These values correspond to practical biomedical applications such as controlled drug release pathogen detection and biomarker monitoring, showing that the theoretical framework captures biologically plausible conditions.

Finally, the wiretap setup considered here serves as a theoretical model for secure molecular communication. While a literal eavesdropper in an in-vivo environment may not be directly realizable, similar situations can arise in practice, for example, unintended molecule reception, cross-network interference, or signal exchange between wearable and implantable devices. From an information-theoretic perspective, these cases correspond to a wiretap channel where part of the transmitted signal is observable by another entity. Hence, the DTPWC offers a simplified yet meaningful model for studying confidentiality in molecular communication systems.

\textit{Outline:} The rest of the paper is organized as follows. Section~\ref{sec:Preli} introduces the system models, provides key definitions related to ID, and presents the main results. In Section \ref{strong}, Strong secrecy is demonstrated for the DTPWC, with the information leakage to the eavesdropper asymptotically approaching zero as the code length tends to infinity. Section~\ref{sec:Proof} establishes the proof of the RI capacity formula for the DTPC and the state-dependent DTPC under average and peak power constraints. Section~\ref{sec:Secureproof} presents the proof of the SRI capacity formula. Finally, Section~\ref{sec:conclusion} concludes the paper.


\begin{table}[!t]
\caption{Summary of Notations}
\label{tab:notations}
\centering
\resizebox{\columnwidth}{!}{%
\begin{tabular}{ll}
\hline
\textbf{Symbol} & \textbf{Description} \\ \hline
$X$ & Channel input (number of emitted molecules per time slot) \\
$Y$ & Output of the main receiver (Bob) \\
$Z$ & Output of the eavesdropper (Eve) \\
$\alpha_B, \lambda_B$ & Channel gain and background noise for Bob \\
$\alpha_E, \lambda_E$ & Channel gain and background noise for Eve \\
$\Delta$ & Time slot duration \\
$P_{\max}, P_{\text{avg}}$ & Peak and average input power constraints \\
$p_{Y|X}, p_{Z|X}$ & Conditional PMFs of Bob’s and Eve’s Poisson channels \\
$C(W)$ & Transmission capacity of the main Poisson channel \\
$C_S(W,V)$ & Secrecy capacity of the Poisson wiretap channel \\
$C_{ID}(W)$ & Identification capacity of the Poisson channel \\
$C_{SID}(W,V)$ & Secure identification capacity of the Poisson wiretap channel \\
$\mathcal{M}$ & Set of identification messages \\
$n$ & Blocklength of the identification code \\
$P_i$ & Output distribution at Eve corresponding to message $i$ \\
$\lambda_n$ & Error parameter (probability of misidentification) \\
$I(M;Z^n)$ & Mutual information between message and Eve’s observation \\
$\mathrm{TV}(\cdot,\cdot)$ & Total variation distance between two distributions \\
$D(P\|Q)$ & Kullback--Leibler divergence between $P$ and $Q$ \\
$\mathcal{C}_{RI}$ & Randomized identification code \\
$\mathcal{C}_{SRI}$ & Secure randomized identification code \\
$P_X$ & Distribution of a random variable $X$ \\
$\mathcal{P}(\mathcal{X})$ & Set of all probability distributions on $\mathcal{X}$ \\
$|\mathcal{X}|$ & Cardinality of the finite set $\mathcal{X}$ \\
$H(P_X)$ & Shannon entropy of $X$ \\
$\mathbb{E}[X]$, $\mathrm{Var}[X]$ & Expectation and variance of $X$ \\
$I(X;Y)$ & Mutual information between random variables $X$ and $Y$ \\
$\mathcal{X}^c$, $\mathcal{X}-\mathcal{Y}$ & Complement and set difference \\
$\{X_n\}_{n=1}^{\infty}$ & Random process \\
-- & All logarithms and information quantities are to the base 2 \\
\hline
\end{tabular}%
}
\end{table}

\section{System Models and Main Results} \label{sec:Preli}

In this section, we describe the system model, introduce the necessary definitions, review known results, and present our main findings. The first subsection focuses on the DTPC, both with and without a state-dependent channel. In the second subsection, we turn to the DTPWC and present our main results for this channel.

The discrete-time Poisson channel adopted in this work represents an idealized molecular communication model that is analytically tractable and suitable for theoretical investigation. In this model, we neglect molecular-specific noise sources such as background noise fluctuations, counting noise, and intersymbol interference (ISI) resulting from molecule diffusion and residual concentration. These assumptions lead to a memoryless Poisson process, which has been widely used in both optical and molecular communication studies as a first-order approximation. The purpose of this simplification is to focus on the fundamental limits of randomized and secure identification rather than on detailed physical modeling. Future work will address more realistic molecular environments including ISI and stochastic molecular noise.

\subsection{The DTPC}
Let a memoryless DTPC $(\setx, \sety, W(y|x))$ consisting of input alphabet $\setx \subset\mathbb{R}_0^+$, output alphabet $\sety\subset \mathbb{Z}_0^+$ and a pmf $W(y|x)$ on $\sety$, be given.
For $n$ channel uses, the transition probability law is given by
\begin{align*}
    W^n(y^n|x^n)&= \prod_{t=1}^n W(y_t|x_t) \\
    & = \prod_{t=1}^n \exp\big(-(x_t+\lambda_0)\big) \frac{(x_t+\lambda_0)^{y_t}}{{y_t}!}, 
\end{align*}
where $\lambda_0$ is some nonnegative constant, called dark current. The dark current $\lambda_0$, here, represents the non-ideality of the detector. The sequences $x^n=(x_1,x_2,\ldots,x_n) \in \setx^n$ and $y^n=(y_1,y_2,\ldots,y_n) \in \sety^n$ are the channel input and the channel output, respectively. 
 The peak and average power constraints on the input are 
\begin{align}
    & x_t\leq \pmax, \quad \forall t=1,\ldots,n, \label{eq:peakPower}\\
    & \frac{1}{n} \sum_{t=1}^n x_t \leq \pavg, \label{eq:averagePower}
\end{align}
 where $P_{\text{max}}, \ P_{\text{avg}}>0$ represent the values for peak power and average power constraints, respectively. 
 Let $\setx(P_{\text{max}},P_{\text{avg}})$ denote the set of all input symbols $x \in \setx$ such that \eqref{eq:peakPower} and \eqref{eq:averagePower} are satisfied.
 
In the following, the transmission code and the rate of transmission codes are defined.

\begin{definition}
An $(n, N, \lambda)$ transmission code is a family of pairs
\[
\left\{(Q(\cdot|i), D_i),\ i = 1, \ldots, N \right\}
\]
such that for all $i \in \{1, \ldots, N\}$ and some $\lambda \in (0,1)$ we have:

\begin{equation}
    Q(\cdot|i) \in \mathcal{P}(\mathcal{X}^n), \quad D_i \subset \mathcal{Y}^n 
\end{equation}

\begin{equation}
    \mu^{(i)} = \sum_{x^n \in \mathcal{X}^n} Q(x^n|i) W^n(D_i^c | x^n) \leq \lambda 
\end{equation}

\begin{equation}
     D_i \cap D_j = \emptyset, \quad i \ne j 
\end{equation}

\end{definition}

\begin{definition}
\begin{enumerate}
\item The rate \( R \) of an  $(n, N, \lambda)$ transmission code for the channel  W is given by:
\begin{equation}
\label{eq:Rate1}
R = \frac{\log\log(N)}{n} \text.
\end{equation}  

\item The transmission rate \( R \) for the channel W is said to be achievable if, for \( \lambda \in (0, 1) \),  there exists an \((n, N, \lambda)\) transmission code for the channel W.

\item The transmission capacity \(C(W, P_{max}, P_{avg}) \) is the supremum of all achievable rates.
\end{enumerate}
\end{definition}

 Let $C(W, \pmax,\pavg)$ denote the Shannon transmission capacity of the DTPC $W$ under peak and average power constraints $\pmax$ and $\pavg$ as described in \eqref{eq:peakPower} and \eqref{eq:averagePower}, respectively. Then  $C(W, \pmax,\pavg)$ is given by \cite{shamaiCapacityAchieving}
\begin{equation}
    C(W,\pmax,\pavg)  =\max_{\substack{P_X \in \mathcal{P}(\setx) \\ X \in \setx(P_{\text{max}},P_{\text{avg}})}} I(X;Y). \label{eq:ShannonCapacityDTPC}
\end{equation}
For the sake of notational simplicity, we denote the DTPC described above by $W$. \\
In the setting depicted in Fig. \ref{figSystem}, the sender  releases molecules (ID message $i \in \mathcal{N}:=\{1,2,\ldots,N\}$) into the DTPC $W$ at a specific rate over a time interval. The receiver, here, (nanomachines, cells) wants to check whether a specific pathological biomarker exists around the target or not.  Assuming that the release propagation and the reception of individual molecules are statistically similar but independent, the received signal follows Poisson statistics.

In the following, we define RI codes for the DTPC introduced above.
\begin{definition}
	An $(n,N,\lambda_1,\lambda_2)$ RI code with $\lambda_1+\lambda_2<1$ for the DTPC $W$ is a family of pairs 
	$\{(Q(\cdot|i),\setd_i), \quad   i=1,\ldots,N\}$ with 
	\begin{equation*}
	Q(\cdot|i) \in \mathcal{P}(\setx^n(P_{\text{max}},P_{\text{avg}})), \ \setd_i \subset \sety^n,\ \forall i=1,\ldots,N,
	\end{equation*}
	such that the errors of the first kind and the second kind are bounded as follows.
	\begin{align}
	 \mu_1^{(i)} &\triangleq \int_{x^n \in \setx^n(P_{\text{max}},P_{\text{avg}})} Q(x^n|i) \nonumber\\
  & \quad \cdot W^n(\setd_i^c|x^n) dx^n  \leq \lambda_1,  \forall i, \label{eq:firsterror}\\
\mu_2^{(i,j)} & \triangleq \int_{x^n \in \setx^n(P_{\text{max}},P_{\text{avg}})} Q(x^n|i) \nonumber\\
& \quad \cdot W^n(\setd_j|x^n) dx^n  \leq \lambda_2, \forall i\neq j, \label{eq:seconderror} 
	\end{align}
 where $\setx^n(P_{\text{max}},P_{\text{avg}})=\underbrace{\setx(P_{\text{max}},P_{\text{avg}})\times \ldots \times \setx(P_{\text{max}},P_{\text{avg}})}_{n \text{ times}}$.
 
\end{definition}
The error $\mu_1^{(i)}$ in \eqref{eq:firsterror} is called the error of the first kind, which is produced by channel noise and fits the same error definition as for a transmission code. In addition, we have another kind of error $\mu_2^{(i,j)}$ called the error of the second kind \eqref{eq:seconderror}, which results from the ID code construction. In contrast to transmission code, we permit a degree of overlap among the decoding sets in the case of ID codes. This results in both a double exponential increase in the rate (e.g., for DMCs) and introduces another kind of error as described in \eqref{eq:seconderror}. \\

In the following, we define achievable ID rate and ID capacity for our system model.
\begin{definition}
	\begin{enumerate}
		\item The rate $R$ of an $(n,N,\lambda_1,\lambda_2)$ RI code for the channel $W$ is $R=\frac{\log\log(N)}{n}$ bits.
		\item The ID rate $R$ for $W$ is said to be achievable if for $\lambda \in (0,\frac{1}{2})$ there exists an $n_0(\lambda)$, such that for all $n\geq n_0(\lambda)$ there exists an $(n,2^{2^{nR}},\lambda,\lambda)$ RI code for $W$.
		\item The RI capacity $C_{ID}(W,\pmax,\pavg)$ of the channel $W$ under the peak and average power constraints \eqref{eq:peakPower} and \eqref{eq:averagePower} is the supremum of all achievable rates.
	\end{enumerate}
\end{definition}
The following Theorem characterizes the RI capacity of the DTPC $W$ under peak and average power constraints $\pmax$ and $\pavg$, respectively.
\begin{theorem} \label{theorem:randomizedID}
 The RI capacity $C_{ID}(W,\pmax,\pavg)$ of the channel $W$ under peak and average power constraints $\pmax$ and $\pavg$, respectively, is given by
	\begin{equation}\label{eq:theorem}
	C_{ID}(W,\pmax,\pavg)=C(W,\pmax,\pavg).
	\end{equation}
\end{theorem}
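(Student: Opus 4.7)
The plan is to establish the two inequalities $C_{ID}(W,\pmax,\pavg) \geq C(W,\pmax,\pavg)$ (achievability) and $C_{ID}(W,\pmax,\pavg) \leq C(W,\pmax,\pavg)$ (converse) separately, following the general programme of Ahlswede and Dueck \cite{Idchannels} while adapting it to the continuous input alphabet and the presence of both power constraints \eqref{eq:peakPower}--\eqref{eq:averagePower}.

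For achievability, I would fix $R < C(W,\pmax,\pavg)$ together with a target error $\lambda \in (0,1/2)$ and start from a Shannon transmission code for $W$ attaining rate $R+\epsilon$ with vanishing average error, whose existence is guaranteed by \eqref{eq:ShannonCapacityDTPC} together with the result of \cite{shamaiCapacityAchieving}. I would then invoke the Ahlswede-Dueck reduction: to each identification message $i\in\{1,\ldots,2^{2^{nR}}\}$ assign a pseudo-random subset $T_i$ of the transmission codebook via hash-function colorings, let the randomized encoder $Q(\cdot|i)$ pick a codeword uniformly from the block indexed by $T_i$, and let the decoder $\setd_i$ be the union of the transmission decoding sets corresponding to $T_i$. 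The first-kind error \eqref{eq:firsterror} then inherits the transmission error directly. The second-kind error \eqref{eq:seconderror} is controlled by a Chernoff bound over the random choice of hash functions, which, because $2^{2^{nR}}$ is doubly exponential, only needs exponential concentration to beat the union bound over all pairs $i\neq j$.

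For the converse, the natural route is through channel resolvability in the Han-Verd\'u sense, which is directly applicable since $\sety$ is discrete. Given an $(n,N,\lambda_1,\lambda_2)$ RI code with $\lambda_1+\lambda_2<1$, the approximate disjointness of the output distributions $W^n Q(\cdot|i)$ forces each of them to be approximable in total variation by a mixture supported on at most $2^{n(C(W,\pmax,\pavg)+o(1))}$ atoms; combining this resolvability statement with a counting argument over the $N$ identification messages yields $\log\log N \leq n(C(W,\pmax,\pavg)+o(1))$. An alternative route is to discretize the input to a finite grid $\setx^{(k)}\subset\setx(\pmax,\pavg)$, apply the original Ahlswede-Dueck converse for DMCs to the resulting channel, and pass to the limit $k\to\infty$ using continuity of the mutual information $I(X;Y)$ in the input distribution for the Poisson family.

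The main obstacle will be the simultaneous enforcement of peak and average power constraints, which is classical for Shannon capacity but more delicate in the ID setting, because the i.i.d.\ random coding underlying the Ahlswede-Dueck construction is incompatible with the average constraint \eqref{eq:averagePower} that couples codeword components across time. I plan to circumvent this by restricting the random code to the union of empirical types whose mean lies at or below $\pavg$, exactly as in \cite{shamaiCapacityAchieving}, and then verifying that the resulting typical set still has enough mass for the hash-function concentration to go through. On the converse side, the quantization step has to respect both constraints uniformly in $n$, which I expect to handle by approximating any capacity-achieving input distribution from below by finitely supported distributions in $\setx(\pmax,\pavg)$.
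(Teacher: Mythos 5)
Your achievability sketch is fine in outline (the Ahlswede--Dueck hashing construction on top of a near-capacity transmission code, with an expurgation or Markov argument to pass from average to maximal error), and in fact the paper does not even carry this out explicitly: it invokes \cite[Corollary 6.6.1]{HanBook}, which says that RI capacity and transmission capacity coincide for any channel (with cost constraints) that satisfies the \emph{strong converse property}, so that both directions reduce to a single verification. The genuine gap in your proposal is in the converse, and it is exactly that verification. Your resolvability step asserts that the approximate disjointness of the output distributions forces each $Q(\cdot|i)W^n$ to be approximable in total variation by a mixture of at most $2^{n(C(W,\pmax,\pavg)+o(1))}$ atoms. What the Han--Verd\'u machinery actually gives is an upper bound by the channel resolvability, i.e.\ by $\sup_{\mathbf{X}\in\mathcal{U}(\pmax,\pavg)}\overline{I}(\mathbf{X},\mathbf{Y})$, the sup-spectral mutual information rate over all (not necessarily i.i.d.) input processes respecting the peak and average power constraints. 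Showing that this quantity is at most $C(W,\pmax,\pavg)$ is precisely the strong converse property, and it is the entire technical content of the paper's proof: one takes the discrete capacity-achieving distribution $P_{\bar{X}}$ of \cite{shamaiCapacityAchieving}, uses the Kuhn--Tucker (Lagrangian) conditions to get the pointwise bound $D\!\left(W(\cdot|x)\,\|\,P_{\bar{Y}}\right)\le C(W,\pmax,\pavg)+\mu(x-\pavg)$ for all $x\in[0,\pmax]$ (so that the average power constraint kills the $\mu$-term after summing over $t$), proves a uniform bound $\gamma(\lambda_0,\pmax)$ on the variance of the per-letter information density $\log\frac{W(Y_t|x_t)}{P_{\bar{Y}}(Y_t)}$, and then applies Chebyshev's inequality plus a change-of-output-measure lemma to conclude $\plimsup_{n\to\infty}\frac1n\log\frac{W^n(Y^n|X^n)}{P_{Y^n}(Y^n)}\le C(W,\pmax,\pavg)$ for every admissible $\mathbf{X}$. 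None of this appears in your plan; you treat the key quantitative statement as automatic.

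Your fallback route (discretize the input, apply the Ahlswede--Dueck converse for DMCs, pass to the limit) does not close this gap either. The classical ID converse is stated for finite input \emph{and} output alphabets, whereas here $\sety=\mathbb{Z}_0^+$ is countably infinite, and, more importantly, continuity of $I(X;Y)$ in the single-letter input distribution does not control the information-spectrum tail uniformly over arbitrary input processes coupled across time by the average power constraint \eqref{eq:averagePower}; that uniform control is exactly what the KKT-plus-variance-plus-Chebyshev argument supplies. So the proposal is not wrong in spirit---it follows the same general programme as the paper via Han's information-spectrum framework---but it is missing the one step that makes the theorem true for this particular channel, namely the proof of the strong converse property for the power-constrained DTPC.
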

It is to be noted that the right-hand side of \eqref{eq:theorem} is a convex optimization problem. Therefore, Theorem \ref{theorem:randomizedID} provides the formula for computing the ID capacity of the DTPC as a function of the peak power and average power values $\pmax$ and $\pavg$, respectively. Many studies have been dedicated to investigating the properties of the capacity-achieving distribution of the DTPC. In the absence of input constraints, the transmission capacity of the DTPC is infinite. The DTPC was addressed in \cite{dytso2021properties} when only subject to peak power constraint and
was shown that the support size is of an order between $\sqrt{\pmax}$ and $\pmax \ln^2 \pmax$. An Analytical formulation of the transmission capacity under only average power constraints is still an open problem. However, bounds and asymptotic behaviors for the DTPC in different scenarios were established e.g., \cite{lapidoth2008capacity}, \cite{lapidoth2008poisson}, \cite{aminian2015capacity}, etc. It was shown in \cite{shamaiCapacityAchieving} that the capacity-achieving distribution for the DTPC under an average-power constraint and/or a peak
power constraint is discrete. However, an analytical formula for the capacity-achieving distribution of the DTPC remains an unresolved question.
All these results can be extended and applied to the RI case.   
\begin{figure}
    \centering
    \resizebox{1\linewidth}{!}{\tikzstyle{farbverlauf} = [ top color=white, bottom color=white]
\tikzstyle{block} = [draw,top color=white, bottom color=white, rectangle, rounded corners,
minimum height=2em, minimum width=2.5em]
\tikzstyle{input} = [coordinate]
\tikzstyle{sum} = [draw, circle,inner sep=0pt, minimum size=2mm,  thick]
\scalebox{1}{
\tikzstyle{arrow}=[draw,->] 
\begin{tikzpicture}[auto, node distance=2cm,>=latex']
\node[] (M) {$i \in \mathcal{N}$};
\node[block,right=.5cm of M] (enc) {Encoder};
\node[block, right=1.7cm of enc] (channel) {Poisson channel $W$};
\node[block,below=.7cm of channel](state){$P_S$};
\node[block, right=1cm of channel] (dec) {Decoder};
\node[right=.5cm of dec] (Mhat) {\begin{tabular}{c} Is ${i}^\prime$ sent? \\ Yes or No? \end{tabular}};
\node[input,right=.5cm of channel] (t1) {};
\node[input,above=1cm of t1] (t2) {};
\draw[-{Latex[length=1.5mm, width=1.5mm]},thick] (M) -- (enc);
\draw[-{Latex[length=1.5mm, width=1.5mm]},thick] (enc) --node[above]{ $X^n$} (channel);
\draw[-{Latex[length=1.5mm, width=1.5mm]},thick] (channel) --node[above]{$Y^n$} (dec);
\draw[-{Latex[length=1.5mm, width=1.5mm]},thick] (dec) -- (Mhat);
\draw[-{Latex[length=1.5mm, width=1.5mm]},thick] (state)--(channel);
\end{tikzpicture}}} 
    \caption{Discrete-time memoryless Poisson channel with random state}
    \label{figSystem}
\end{figure}
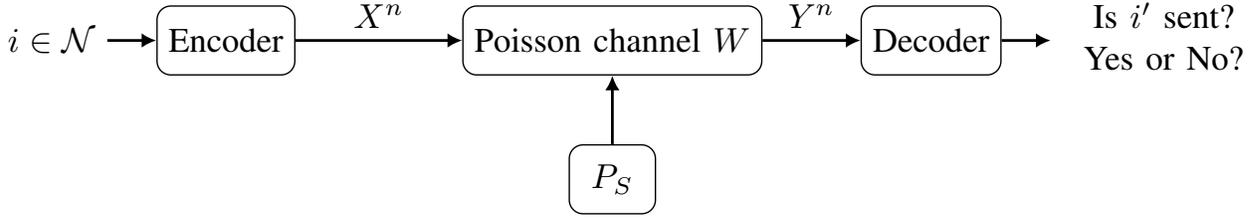

Let a DTPC with random state $(\setx\times \sets, W_S(y|x,s), \sety)$ consisting of an input alphabet $\setx \subset \mathbb{R}_0^+$, a discrete output alphabet $\sety \subset \mathbb{Z}_0^+$, a finite state set $\sets$ and a pmf $W(y|x,s)$ on $\sety$, be given. 
The channel is memoryless, i.e., the probability for a sequence $y^n \in \sety^n$ to be received if the input sequence $x^n \in \setx^n$ was sent and the sequence state is $s^n \in \sets^n$ is given by 
		\begin{equation*}
		W_S^n(y^n|x^n,s^n)=\prod_{t=1}^n W_S(y_t|x_t,s_t).
		\end{equation*}
		 The state sequence $(S_1,S_2,\ldots,S_n)$ is i.i.d. according to the distribution $P_S$. We assume that the input $X_t$ and the state $S_t$ are statistically independent for all $t\in\{1,2,\ldots,n\}$. We further assume that the input satisfies the peak power and average power constraints in \eqref{eq:peakPower} and \eqref{eq:averagePower}, respectively. We further assume that the channel state is  known to neither the sender nor the receiver. Now, let's define ID codes for the state-dependent DTPC $W_S$. 
\begin{definition}
	An $(n,N,\lambda_1,\lambda_2)$ RI code with $\lambda_1+\lambda_2<1$ for the channel $W_S$ is a family of pairs 
	$\{(Q(\cdot|i),\setd_i(s^n))_{s^n \in \sets^n}, \quad   i=1,\ldots,N\}$ with 
	\begin{equation*}
	Q(\cdot|i) \in \mathcal{P}(\setx^n(P_{\text{max}},P_{\text{avg}})), \ \setd_i(s^n) \subset \sety^n \end{equation*}
 for all $s^n \in \sets^n$ and for all $i=1,\ldots,N$
	such that the errors of the first kind and the second kind are bounded as follows.
	\begin{align*}
	&\sum_{s^n \in \sets^n} P_S^n(s^n) \int_{x^n \in \setx^n(P_{\text{max}},P_{\text{avg}})} Q(x^n|i) \\
 & \quad \cdot W_S^n(\setd_i(s^n)^c|x^n,s^n) dx^n  \leq \lambda_1,  \forall i,\\
	&\sum_{s^n \in \sets^n} P_S^n(s^n) \int_{x^n \in \setx^n(P_{\text{max}},P_{\text{avg}})} Q(x^n|i) \\
 & \quad \cdot W_S^n(\setd_j(s^n)|x^n,s^n) dx^n \leq \lambda_2, \forall i\neq j. 
	\end{align*}

\end{definition}
In the following, we define achievable ID rate and ID capacity for the state-dependent DTPC $W_S$.
\begin{definition}
	\begin{enumerate}
		\item The rate $R$ of an $(n,N,\lambda_1,\lambda_2)$ RI code for the channel $W_S$ is $R=\frac{\log\log(N)}{n}$ bits.
		\item The ID rate $R$ for $W_S$ is said to be achievable if for $\lambda \in (0,\frac{1}{2})$ there exists an $n_0(\lambda)$, such that for all $n\geq n_0(\lambda)$ there exists an $(n,2^{2^{nR}},\lambda,\lambda)$ randomized ID code for $W_S$.
		\item The RI capacity $C_{ID}(W_S,\pmax,\pavg)$ of the channel $W_S$ under peak and average power constraints \eqref{eq:peakPower} and \eqref{eq:averagePower} is the supremum of all achievable rates.
	\end{enumerate}
\end{definition}
The following corollary, a consequence of Theorem \ref{theorem:randomizedID}, outlines the RI capacity of the state-dependent DTPC $W_S$.
\begin{corollary} \label{corollary}
The RI capacity of the state-dependent DTPC $W_S$ under peak and average power constraints $\pmax$ and $\pavg$, respectively, is given by 
\begin{equation*}
    C_{ID}(W_S,\pmax,\pavg) =\max_{\substack{P_X \in \mathcal{P}(\setx) \\ X \in \setx(P_{\text{max}},P_{\text{avg}})}} I(X;Y).
\end{equation*}
\end{corollary}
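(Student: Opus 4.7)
The plan is to reduce the state-dependent DTPC $W_S$ to an equivalent state-\emph{independent} memoryless channel by averaging over the state distribution, and then invoke Theorem~\ref{theorem:randomizedID}. Define
\[
\tilde W(y|x) \triangleq \sum_{s \in \sets} P_S(s)\, W_S(y|x,s).
\]
Since $\{S_t\}$ is i.i.d. with law $P_S$, statistically independent of the input, and unknown to both sender and receiver, the $n$-fold induced kernel factorizes as $\tilde W^n(y^n|x^n) = \prod_{t=1}^n \tilde W(y_t|x_t)$, defining a memoryless channel on $\setx \times \sety$ subject to the same peak and average power constraints.

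The first step is to observe that the decoding sets in the definition of an RI code for $W_S$ may be taken, without loss of generality, to be independent of $s^n$. Because the realization $s^n$ is revealed to neither party, any implementable decoder is a (possibly randomized) function of $Y^n$ alone, so its acceptance region is an $s^n$-blind subset $\setd_i \subset \sety^n$. Substituting $\setd_i(s^n)\equiv\setd_i$ into the state-averaged error expressions and exchanging the sum over $s^n$ with the integral over $x^n$ collapses them to
\begin{align*}
\mu_1^{(i)} &= \int Q(x^n|i)\, \tilde W^n(\setd_i^c\,|\,x^n)\, dx^n, \\
\mu_2^{(i,j)} &= \int Q(x^n|i)\, \tilde W^n(\setd_j\,|\,x^n)\, dx^n,
\end{align*}
which are exactly the first- and second-kind error expressions for an RI code over $\tilde W$ with the same encoder distributions, decoding sets, and input constraints. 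This establishes that a rate $R$ is achievable for $W_S$ if and only if it is achievable for $\tilde W$.

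Next I would apply Theorem~\ref{theorem:randomizedID} to $\tilde W$. Although $\tilde W$ is a mixture of Poisson laws rather than a pure Poisson kernel, the Ahlswede--Dueck mechanism used to prove Theorem~\ref{theorem:randomizedID} relies only on the existence of a Shannon transmission code at rate $C(\tilde W, \pmax, \pavg)$ together with a soft-covering/hash-based expansion; the Poisson structure enters only through the specific evaluation of $I(X;Y)$ and plays no role in the capacity identity $C_{ID} = C$. Consequently, $C_{ID}(\tilde W, \pmax, \pavg) = \max_{P_X \in \mathcal{P}(\setx),\, X \in \setx(\pmax,\pavg)} I(X;Y)$, where $Y \sim \tilde W(\cdot|X)$, which is precisely the expression in the corollary.

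The step I expect to require the most care is the reduction justifying the use of $s^n$-blind decoding sets in the converse direction. While it is intuitively clear that a decoder with no access to $s^n$ cannot benefit from a state-indexed family, the formal definition permits $\setd_i(s^n)$, so one must show rigorously that replacing such a family by a fixed $\setd_i$ does not decrease achievability or increase capacity. The cleanest route is to note that, from the receiver's vantage point, only the marginal output distribution $\sum_{s^n} P_S^n(s^n) W_S^n(\cdot|x^n,s^n) = \tilde W^n(\cdot|x^n)$ is observable; hence the entire code-design problem collapses, measurably, to an RI problem over $\tilde W$, where Theorem~\ref{theorem:randomizedID} supplies both achievability and converse.
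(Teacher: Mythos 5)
Your reduction to the averaged channel $\tilde W$ is exactly the paper's route (the paper calls it $W^a$), and the first half of your argument is fine in spirit: once the decoding sets are state-blind, the error expressions of an RI code for $W_S$ depend on the state only through the mixture kernel $\tilde W^n$, so the RI problem for $W_S$ collapses to the RI problem for $\tilde W$. (The paper's formal definition does write $\setd_i(s^n)$, but its proof, like yours, treats the receiver as state-blind, so this is a notational wrinkle of the paper rather than a flaw of your reduction; your "cleanest route" justification is somewhat circular, but the intended model makes it harmless.)

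The genuine gap is in your final step, where you apply Theorem~\ref{theorem:randomizedID} to $\tilde W$ on the grounds that the Poisson structure \emph{plays no role in the capacity identity} $C_{ID}=C$. That mischaracterizes where the work in Theorem~\ref{theorem:randomizedID} lies. The achievability half ($C_{ID}\ge C$) indeed needs only a transmission code plus Ahlswede--Dueck hashing; but the converse half rests on the \emph{strong converse property} of the channel, via \cite[Corollary 6.6.1]{HanBook} --- for general channels without the strong converse, the RI capacity can strictly exceed the transmission capacity, so the identity is not structure-agnostic. In the paper, essentially the entire proof in Section~\ref{sec:Proof} is devoted to establishing the strong converse for the Poisson kernel, and it uses the Poisson form explicitly: the Lagrangian/KKT characterization and discreteness of the capacity-achieving input from \cite{shamaiCapacityAchieving}, the Kullback--Leibler divergence between Poisson laws in \eqref{eq:firstterm}, and the Poisson moments in the variance bound \eqref{eq:secondterm}. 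Since $\tilde W$ is a finite mixture of Poisson kernels and not itself a DTPC, Theorem~\ref{theorem:randomizedID} does not literally apply; what must be supplied is the strong converse property for $\tilde W$ --- plausible, and provable along the same lines because $\sets$ is finite and all Poisson parameters lie in a bounded interval $[\lambda_0,\lambda_0+\pmax]$, but it requires redoing the information-density variance bound (and the existence of a suitable capacity-achieving input) for the averaged channel. This is precisely the step the paper isolates ("as the set of states $\sets$ is finite, we can show that $W_S$ satisfies the strong converse property as well"), and your proposal replaces it with an assertion that the channel structure does not matter, which is not a valid substitute.
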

\begin{proof}
Let $W^a(y|x) = \sum_{s \in \mathcal{S}} P_S(s) W_S(y|x,s)$ be the channel obtained by averaging the state-dependent DTPCs $W_S$ over the finite set of states $\mathcal{S}$.
Then, it can be readily verified that the Shannon capacity of the state-dependent DTPC $W_S$ when the state is available at neither the sender nor the receiver is given by
\begin{equation*}
    C(W_S, \pmax,\pavg) =\max_{\substack{P_X \in \mathcal{P}(\setx) \\ X \in \setx(P_{\text{max}},P_{\text{avg}})}} I(X;Y).
\end{equation*}
As the set of states $\sets$ is finite, we can show that the state-dependent DTPC $W_S$ satisfies the strong converse property as well. That means that the RI capacity of the state-dependent DTPC $W_S$ coincides with its Shannon transmission capacity. This completes the proof of Corollary \ref{corollary}.
\end{proof}
Now, we revisit the definitions of inf-mutual information rate and sup-mutual information rate defined in \cite{HanBook}. Those quantities will play a key role in proving Theorem \ref{theorem:randomizedID}.
We first recall the definition of $\liminf$ and $\limsup$ in probability.
\begin{definition}
For a given sequence $\{X^n\}_{n=1}^\infty$ of RVs, the $\liminf$ and $\limsup$ in probability are defined as
\begin{align*}
    \pliminf_{n\to \infty} X^n & := \sup \big\{ \beta\colon \lim_{n \to \infty} \Pr\{X^n< \beta\}=0 \big\} \\
    \plimsup_{n\to \infty} X^n & := \inf \big\{ \alpha\colon \lim_{n \to \infty} \Pr\{X^n>\alpha\}=0 \big\},
\end{align*}
respectively. We refer the reader to \cite{HanBook} for more details.
\end{definition}
The quantities mentioned above play a pivotal role in establishing information-spectrum methods.
\begin{definition}
    Let $\mathbf{X}=\{X^n\}_{n=1}^\infty$ be a general input process, where $X^n$ is an arbitrary RV with probability distribution $P_{X^n}$ on $\setx^n$. Let $\mathbf\{Y^n\}_{n=1}^\infty$ be the output of a channel $\mathbf{W}=\{W^n\}_{n=1}^\infty$ corresponding to the input $\mathbf{X}$. The output $Y^n$ is an arbitrary RV with probability distribution $P_{Y^n}$ on $\sety^n$ given by
    \begin{align*}
        P_{X^nY^n}(x^n,y^n) &= P_{X^n}(x^n) W^n(y^n|x^n).
    \end{align*}
    The spectral inf-mutual information rate and the spectral sup-mutual information rate defined in \cite{HanBook} are given by \eqref{eq:infMI} and \eqref{eq: supMI}, respectively.
   \begin{align}
  \underline{I}(\mathbf{X},\mathbf{Y}) &= \pliminf_{n\to \infty} \frac{1}{n} \frac{W^n(Y^n|X^n)}{P_{Y^n}(Y^n)}, \label{eq:infMI} \\
  \overline{I}(\mathbf{X},\mathbf{Y}) &= \plimsup_{n\to \infty} \frac{1}{n} \frac{W^n(Y^n|X^n)}{P_{Y^n}(Y^n)}, \label{eq: supMI}
   \end{align}
  where $\frac{1}{n} \frac{W^n(Y^n|X^n)}{P_{Y^n}(Y^n)}$ is called the mutual information density rate of $(\mathbf{X},\mathbf{Y})$ as defined in \cite{HanBook}.
\end{definition}

\subsection{The DTPWC}

This section presents an analysis of the identification capacity of the Discrete-Time Poisson Wiretap Channel (DTPWC). The DTPWC, illustrated in Fig.\ref{fig:DTPWC}, consists of a transmitter (Alice), a legitimate receiver (Bob), and a wiretapper (Eve). Figure \ref{fig:cell} shows the Molecular Wiretap Channel with Transmitter (TX), Receiver (RX), and Eavesdropper (EX) implemented at the cell level.

\begin{figure}
    \centering
    \resizebox{1\linewidth}{!}{\tikzset{every picture/.style={line width=0.75pt}} 

\begin{tikzpicture}[x=0.75pt,y=0.75pt,yscale=-1,xscale=1]

\draw   (221,81) .. controls (221,77.69) and (223.69,75) .. (227,75) -- (354.5,75) .. controls (357.81,75) and (360.5,77.69) .. (360.5,81) -- (360.5,99) .. controls (360.5,102.31) and (357.81,105) .. (354.5,105) -- (227,105) .. controls (223.69,105) and (221,102.31) .. (221,99) -- cycle ;
\draw   (112,79.6) .. controls (112,75.95) and (114.95,73) .. (118.6,73) -- (159.9,73) .. controls (163.55,73) and (166.5,75.95) .. (166.5,79.6) -- (166.5,99.4) .. controls (166.5,103.05) and (163.55,106) .. (159.9,106) -- (118.6,106) .. controls (114.95,106) and (112,103.05) .. (112,99.4) -- cycle ;
\draw   (414.5,80.6) .. controls (414.5,76.95) and (417.45,74) .. (421.1,74) -- (462.4,74) .. controls (466.05,74) and (469,76.95) .. (469,80.6) -- (469,100.4) .. controls (469,104.05) and (466.05,107) .. (462.4,107) -- (421.1,107) .. controls (417.45,107) and (414.5,104.05) .. (414.5,100.4) -- cycle ;
\draw   (409,146.5) .. controls (409,143.19) and (411.69,140.5) .. (415,140.5) -- (542.5,140.5) .. controls (545.81,140.5) and (548.5,143.19) .. (548.5,146.5) -- (548.5,164.5) .. controls (548.5,167.81) and (545.81,170.5) .. (542.5,170.5) -- (415,170.5) .. controls (411.69,170.5) and (409,167.81) .. (409,164.5) -- cycle ;
\draw    (57.5,90.09) -- (109,90.09) ;
\draw [shift={(111,90.09)}, rotate = 180] [color={rgb, 255:red, 0; green, 0; blue, 0 }  ][line width=0.75]    (10.93,-3.29) .. controls (6.95,-1.4) and (3.31,-0.3) .. (0,0) .. controls (3.31,0.3) and (6.95,1.4) .. (10.93,3.29)   ;
\draw    (166.5,89.59) -- (218,89.59) ;
\draw [shift={(220,89.59)}, rotate = 180] [color={rgb, 255:red, 0; green, 0; blue, 0 }  ][line width=0.75]    (10.93,-3.29) .. controls (6.95,-1.4) and (3.31,-0.3) .. (0,0) .. controls (3.31,0.3) and (6.95,1.4) .. (10.93,3.29)   ;
\draw    (361,90.59) -- (412.5,90.59) ;
\draw [shift={(414.5,90.59)}, rotate = 180] [color={rgb, 255:red, 0; green, 0; blue, 0 }  ][line width=0.75]    (10.93,-3.29) .. controls (6.95,-1.4) and (3.31,-0.3) .. (0,0) .. controls (3.31,0.3) and (6.95,1.4) .. (10.93,3.29)   ;
\draw    (469.5,90.59) -- (582,90.59) ;
\draw [shift={(584,90.59)}, rotate = 180] [color={rgb, 255:red, 0; green, 0; blue, 0 }  ][line width=0.75]    (10.93,-3.29) .. controls (6.95,-1.4) and (3.31,-0.3) .. (0,0) .. controls (3.31,0.3) and (6.95,1.4) .. (10.93,3.29)   ;
\draw    (547.5,156.59) -- (585.5,156.59) ;
\draw [shift={(587.5,156.59)}, rotate = 180] [color={rgb, 255:red, 0; green, 0; blue, 0 }  ][line width=0.75]    (10.93,-3.29) .. controls (6.95,-1.4) and (3.31,-0.3) .. (0,0) .. controls (3.31,0.3) and (6.95,1.4) .. (10.93,3.29)   ;
\draw    (381.25,156.09) -- (406,156.09) ;
\draw [shift={(408,156.09)}, rotate = 180] [color={rgb, 255:red, 0; green, 0; blue, 0 }  ][line width=0.75]    (10.93,-3.29) .. controls (6.95,-1.4) and (3.31,-0.3) .. (0,0) .. controls (3.31,0.3) and (6.95,1.4) .. (10.93,3.29)   ;
\draw    (381.25,90.34) -- (381.25,156.09) ;
\draw    (439,51.84) -- (439,71.84) ;
\draw [shift={(439,73.84)}, rotate = 270] [color={rgb, 255:red, 0; green, 0; blue, 0 }  ][line width=0.75]    (10.93,-3.29) .. controls (6.95,-1.4) and (3.31,-0.3) .. (0,0) .. controls (3.31,0.3) and (6.95,1.4) .. (10.93,3.29)   ;

\draw (114,83) node [anchor=north west][inner sep=0.75pt]   [align=left] {{\small Encoder}};
\draw (239,83) node [anchor=north west][inner sep=0.75pt]   [align=left] {{\small Main Channel W}};
\draw (417,83) node [anchor=north west][inner sep=0.75pt]   [align=left] {{\small Decoder}};
\draw (425,149) node [anchor=north west][inner sep=0.75pt]   [align=left] {{\small Wiretap Channel V}};
\draw (17,83) node [anchor=north west][inner sep=0.75pt]    {$i\in N$};
\draw (179,66) node [anchor=north west][inner sep=0.75pt]    {$X^{n}$};
\draw (374,66) node [anchor=north west][inner sep=0.75pt]    {$Y^{n}$};
\draw (434,28) node [anchor=north west][inner sep=0.75pt]    {$j$};
\draw (596,149) node [anchor=north west][inner sep=0.75pt]    {$Z^{n}$};
\draw (589,83) node [anchor=north west][inner sep=0.75pt]   [align=left] {Yes/No};

\end{tikzpicture}} 
    \caption{Degraded DTPWC.}
    \label{fig:DTPWC}
\end{figure}
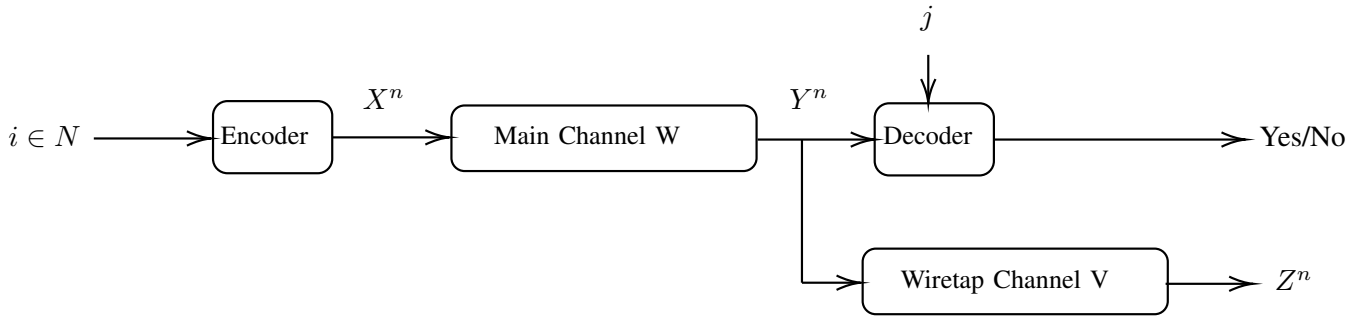

Let \( X \) represent the input alphabet, while \( Y \) and \( Z \) denote the output alphabets for the legitimate receiver and the wiretapper, respectively. The channel transition matrix between Alice and Bob is given by \( W(y|x) \), while \( V(z|x) \) represents the channel transition matrix between Alice and Eve. The wiretap channel is denoted by the pair \( (W, V) \). If both \( W \) and \( V \) are discrete, then the wiretap channel is also discrete. Communication over the wiretap channel must ensure both reliability and secrecy.

The parameters of the considered Poisson model can also be interpreted in the context of realistic molecular communication environments. For instance, nano-transmitters may release information molecules at rates ranging from $(10^3)$ to $(10^6)$ molecules per second, and typical diffusion coefficients of small molecules in aqueous media are on the order of $(10^{-9})–(10^{-10})$ m²/s. Moreover, enzymatic degradation and other biochemical reactions occurring in biological fluids influence the effective lifetime of the molecules. These ranges are consistent with practical biomedical scenarios such as controlled drug delivery and biomarker or pathogen detection, indicating that the proposed model, while theoretical, captures biologically plausible dynamics. The subsequent analysis is therefore based on this simplified but representative framework.

\begin{figure}
    \centering
    \resizebox{1\linewidth}{!}{\input{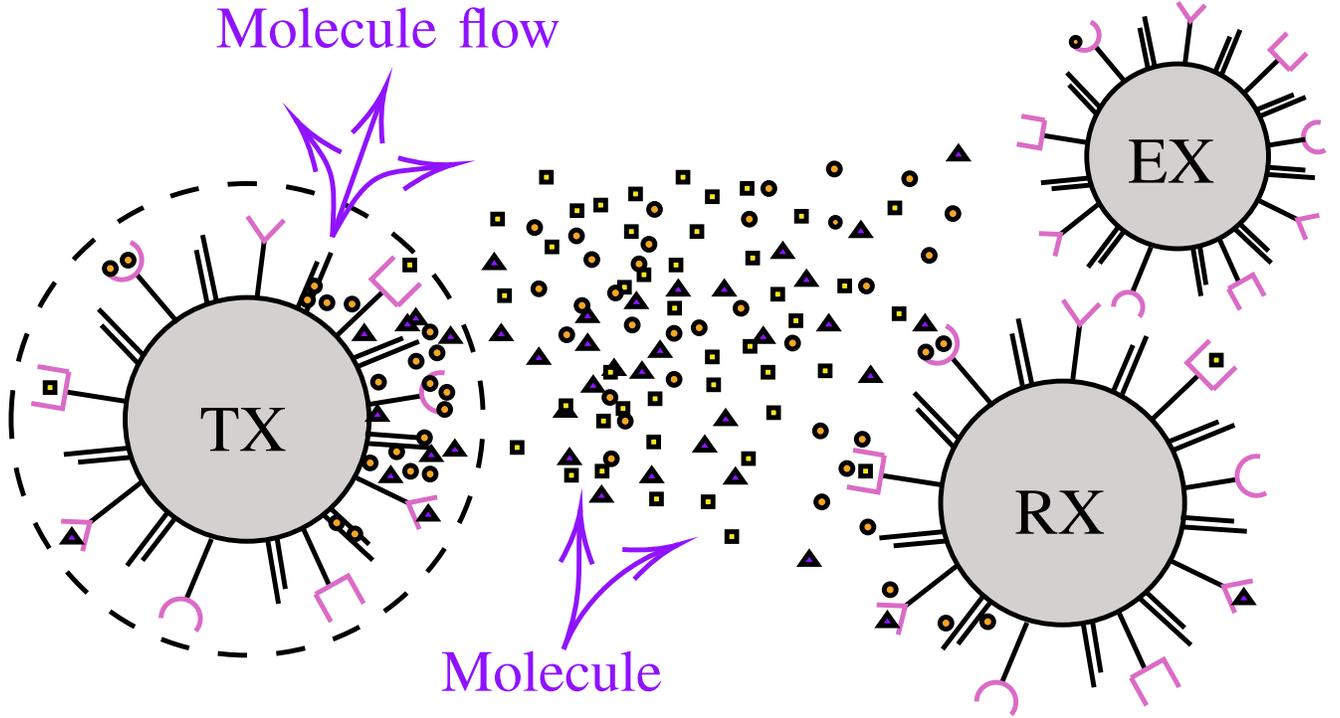}} 
    \caption{Cell-Based Wiretap Model in Molecular Communications}
    \label{fig:cell}
\end{figure}

For \( n \) channel uses, the transition probabilities for the wiretap channel are:


\[(W, V):
\left\{
\begin{aligned}
W^n(y^n|x^n) &= \prod_{t=1}^{n} \exp\bigl(-(x_t + \lambda_B)\bigr) \frac{(x_t + \lambda_B)^{y_t}}{y_t!}, \\
V^n(z^n|x^n) &= \prod_{t=1}^{n} \exp\bigl(-(x_t + \lambda_E)\bigr) \frac{(x_t + \lambda_E)^{z_t}}{z_t!}.
\end{aligned}
\right.
\]

where \( \lambda_B \) and \( \lambda_E \) are the dark currents at the legitimate user's and the eavesdropper’s receivers, respectively. Additionally, \( x^n = (x_1, x_2, \dots, x_n) \) represents the channel input sequence, while \( y^n = (y_1, y_2, \dots, y_n) \) and \( z^n = (z_1, z_2, \dots, z_n) \) denote the observations of Bob and Eve, respectively.

The message set is defined as:

\[
\mathcal{N}= \{1, 2, \dots, N\}.
\]

The channel input is assumed to satisfy the following peak and average power constraints:  
\begin{equation}
\label{eq:powermax} 
x_t \leq P_{\text{max}}, \quad \forall t = 1, \dots, n,
\end{equation}

\begin{equation}
\label{eq:poweravg} 
\frac{1}{n} \sum_{t=1}^{n} x_t \leq P_{\text{avg}}.
\end{equation}
We say that a transmission code for the DTPC is a transmission code for the DTPWC under the \emph{weak secrecy} criterion if it additionally satisfies
\begin{equation}
    \lim_{n\to\infty}\frac{1}{n} I(M;Z) = 0.
\end{equation}

It was shown in \cite{soltani2021degraded} that the capacity of the degraded wiretap channel is given by
\begin{equation}
    \label{eq:Cs}
    C_S(W) = \sup_{X} \left[ I(X;Y) - I(X;Z) \right].
\end{equation}

However, for our proof later, we require a stronger notion of secrecy, known as \emph{strong secrecy}, which is defined by the same condition:
\begin{equation}
    \lim_{n\to\infty} I(M;Z) = 0.
\end{equation}
Nonetheless, as we show in Section~\ref{strong}, the result under weak secrecy implies the result under strong secrecy.

We now proceed with the definition of secrecy for RI codes.

\begin{definition}
A randomized \((n, N, \lambda_1, \lambda_2)\) identification code for the DTPWC \((V, W)\) is defined as a family of pairs \( \{ (Q(\cdot | i), \mathcal{D}_i) , i = 1, \dots, N \} \) such that, for every \( i \in \{1, \dots, N\} \), with some \( \lambda_1 + \lambda_2 < 1 \), and for all \( \varepsilon \subset \mathbb{Z}^n \), the following holds:

\begin{equation}
Q(\cdot | i) \in \mathcal{P}(\mathcal{X}^n(P_\text{max},P_\text{avg} )), \quad \mathcal{D}_i \subset
 \mathcal{Y}^n,\quad \forall i = 1, \dots, N,
\end{equation}

\begin{equation}
\sum_{i=1}^{n} x_i \leq n \cdot P_{avg}, \quad \forall x^n \in \mathcal{X}^n, \quad x^n = (x_1, x_2, \dots, x_n),
\end{equation}

\begin{equation}
\label{eq:Type1}
\int_{x^n \in X^n} Q(x^n | i) W^n(\mathcal{D}_i^c | x^n) \, d^n x^n \leq \lambda_1,
\end{equation}

\begin{equation}
\label{eq:Type2}
\int_{x^n \in X^n} Q(x^n | j) W^n(\mathcal{D}_i | x^n) \, d^n x^n \leq \lambda_2, \quad \forall i \neq j,
\end{equation}

\begin{equation}
\label{eq:Eve}
\begin{split}
&\int_{x^n \in X^n} Q(x^n | j) V^n(\varepsilon | x^n) \, d^n x^n \\
&\quad + \int_{x^n \in X^n} Q(x^n | i) V^n(\varepsilon | x^n) \, d^n x^n \\
&\geq 1 - \lambda, \quad \forall i \neq j,\; \lambda \in (0, 1).
\end{split}
\end{equation}

where:
\begin{equation}
\begin{split}
\label{eq:X(P)}
&\mathcal{X}^n(P_\text{max},P_\text{avg} ) =\\
&\underbrace{\mathcal{X}(P_\text{max},P_\text{avg} ) \times \mathcal{X}(P_\text{max},P_\text{avg} ) \times \cdots \times \mathcal{X}(P_\text{max},P_\text{avg} )}_{n \, \text{times}}.
\end{split}
\end{equation}

Equation \ref{eq:Eve} illustrates that the wiretapper is unable to identify the transmitted message \( i \). It represents the sum of the type I and type II errors for this hypothesis test. Specifically, it implies that either the type I error exceeds \( 1 - \frac{\lambda}{2} \), or there exists some \( j \neq i \) where the type II error exceeds \( 1 - \frac{\lambda}{2} \), or both errors occur simultaneously. This ensures that the wiretapper cannot correctly identify message \( i \). Furthermore, the condition in \ref{eq:Eve} guarantees semantic secrecy, which, in turn, provides strong secrecy assurances for the system:

\begin{equation}
\label{eq:strongsecrecy}
\lim_{n \to \infty} I(M; Z^n) = 0.
\end{equation}

\end{definition}
In the following, the achievable rate and capacity are defined
for the system model.
\begin{definition}
\begin{enumerate}
\item The rate \( R \) of an \((n, N, \lambda_1, \lambda_2)\) Randomized Identification (RI) code for the channel (\( W, V \)) is given by:

\begin{equation}
\label{eq:Rate2}
R = \frac{\log\log(N)}{n} \text.
\end{equation}  

\item The ID rate \( R \) for the channel pair \( (W, V) \) is said to be achievable if, for \( \lambda \in (0, \frac{1}{2}) \), there exists an \( n_0(\lambda) \) such that for all \( n \geq n_0(\lambda) \), there exists an \((n, N, \lambda, \lambda)\) Randomized Identification (RI) code for the channel \( (W, V) \).

\item The RI capacity \( C_{\text{SID}}(W, V) \) is the supremum of all achievable rates.
\end{enumerate}
\end{definition}

\begin{theorem}
\label{theorem:DichotomyTheorem}
Let \( C(W) \) be the Shannon capacity of the channel \( W \) and let \( C_S(W, V) \) be the secrecy capacity of the wiretap channel \( (W, V) \). Then:
\begin{equation}
\label{eq26}
C_{\text{SID}}(W, V) = 
\begin{cases} 
C(W) & \text{if } C_S(W, V) > 0 \\[6pt] 
0 & \text{if } C_S(W, V) = 0
\end{cases}
\end{equation}

Where \( C_{\text{SID}}(W, V) \) is the secure identification capacity of the channel \( (W, V) \) defined as follows:

\begin{equation}
\begin{split}
C_{\text{SID}}(W, V) = \max\big\{ R : \forall \lambda > 0,\;
\exists n(\lambda) \text{ such that for } n \ge n(\lambda), \\
N_S(n, \lambda) \ge 2^{2^{nR}} \big\}
\end{split}
\end{equation}

where \( N_S(n, \lambda) \) is the maximal cardinality such that an \((n, N, \lambda_1, \lambda_2)\) identification wiretap code for the channel \( (W, V) \) exists.

\end{theorem}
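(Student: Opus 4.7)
The plan is to establish the dichotomy by first noting the trivial upper bound $C_{\text{SID}}(W,V)\le C(W)$, which follows immediately from Theorem~\ref{theorem:randomizedID}: any secure RI code is in particular an RI code for $W$, so its rate is bounded by the RI capacity $C_{\text{ID}}(W,\pmax,\pavg)=C(W,\pmax,\pavg)$. The substantive content lies in (i) achievability of $C(W)$ when $C_S(W,V)>0$, and (ii) the converse $C_{\text{SID}}(W,V)=0$ when $C_S(W,V)=0$.

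For (i), I would use a two-phase block construction. Split the blocklength $n$ into segments of lengths $n_1=\lceil \alpha n\rceil$ and $n_2=n-n_1$ with $\alpha\to 0$. In the first segment, Alice transmits a uniformly distributed secret key $K\in\mathcal{K}$ to Bob using a wiretap code at rate arbitrarily close to $C_S(W,V)$; invoking the strong-secrecy result established in Section~\ref{strong}, one obtains $\lim_{n\to\infty} I(K;Z^{n_1})=0$ while Bob decodes $K$ reliably from $Y^{n_1}$. In the second segment Alice runs an RI code that, by Theorem~\ref{theorem:randomizedID}, achieves rate arbitrarily close to $C(W,\pmax,\pavg)$ on $W^{n_2}$. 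The key $K$ is used to scramble the local randomization of the RI code, so that for every message $i$ the induced output distribution at Eve, $\sum_k \tfrac{1}{|\mathcal{K}|} V^{n_2}(\cdot\mid Q_k(\cdot\mid i))$, is nearly constant in $i$. Since $\alpha\to 0$, the effective rate is $(1-\alpha)C(W)\to C(W)$; the first-kind and second-kind error bounds \eqref{eq:Type1}, \eqref{eq:Type2} follow from the RI code guarantees, and \eqref{eq:Eve} (hence \eqref{eq:strongsecrecy}) follows from the strong-secrecy property of the key together with a resolvability-type argument on Eve's channel.

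For (ii), assume for contradiction that some rate $R>0$ is SRI-achievable when $C_S(W,V)=0$. The wiretap-secrecy condition \eqref{eq:Eve} forces the distributions $P_i^n(\cdot):=\int Q(x^n\mid i)V^n(\cdot\mid x^n)\,dx^n$ to satisfy a pairwise indistinguishability condition: for every $i\neq j$ and every $\varepsilon\subset\mathbb{Z}^n$, $P_i^n(\varepsilon)+P_j^n(\varepsilon^c)\ge 1-\lambda$. This is exactly the statement that the Bayes error for testing $P_i^n$ versus $P_j^n$ exceeds $(1-\lambda)/2$. Standard hypothesis-testing reductions (Stein/Neyman--Pearson) then give that the KL-divergences $D(P_i^n\,\|\,P_j^n)$ grow sublinearly in $n$, which combined with the doubly exponential growth of $N=2^{2^{nR}}$ would produce, via a soft-covering / channel-resolvability construction, a transmission code for the wiretap channel with positive secrecy rate, contradicting $C_S(W,V)=0$.

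The main obstacle is the achievability step: establishing that the key-scrambled RI code satisfies strong secrecy \eqref{eq:strongsecrecy} \emph{uniformly} over all $2^{2^{nR}}$ messages, not just in an average sense, since the number of messages is doubly exponential while the key is only singly exponential in $n_1$. Standard random-coding and soft-covering bounds give exponentially small per-message leakage, and a careful union bound, exploiting the fact that the RI decoder only needs to distinguish pairs $(i,j)$, suffices; making the required bounds compatible with the peak and average power constraints \eqref{eq:powermax}--\eqref{eq:poweravg} of the DTPC is a secondary but nontrivial technical point.
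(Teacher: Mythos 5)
Your overall strategy---use the positive secrecy capacity to hide the message-identifying component while the rate is driven by $C(W)$, and bound $C_{\text{SID}}$ above via Theorem~\ref{theorem:randomizedID}---is in the right spirit, but it diverges from the paper's route and has concrete gaps. The paper's proof is purely an achievability argument for \eqref{eq:CSID}: it concatenates an ordinary transmission code $C'$ of blocklength $n$ and rate $C(W,\pmax,\pavg)-\epsilon$ (carrying the uniformly chosen index of the Ahlswede--Dueck construction) with a \emph{wiretap} code $C''$ of negligible blocklength $\lceil\sqrt{n}\rceil$ and rate $\epsilon$ that carries the color; no secret key is established and nothing is scrambled. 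The secrecy requirement \eqref{eq:Eve} then reduces to showing that Eve cannot identify the color, which is done by bounding the total variation between the output measures $Q_iV$ and $Q_jV$ at Eve, and---this is the step specific to the DTPC that your proposal omits entirely---by approximating the continuous-input Poisson wiretap channel by a discretized one, importing the discrete-channel bound \eqref{eq:two} and paying an approximation penalty $2\delta'$ as in \eqref{eq:one}--\eqref{eq:last}. Your two-phase key-based construction is therefore a genuinely different (and in principle viable) achievability mechanism, at the cost of re-proving uniform secrecy rather than inheriting it from the wiretap code directly.

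The main gap in your achievability step is that you never introduce the (index, color) structure on which identification rests, so ``scrambling the local randomization of the RI code with $K$'' is not a well-defined mechanism: the local randomization (the uniformly chosen index $j$) is already message-independent, and what must be hidden from Eve is the color $c_i(j)$; unless $K$ is used to one-time-pad the color (or an equivalent device), it is not clear that the key-averaged output at Eve is nearly constant in $i$ while Bob retains identifiability. If you do pad the color, your declared ``main obstacle'' (uniformity over $2^{2^{nR}}$ messages with only a singly exponential key) largely dissolves: conditioned on a uniform key the padded color is independent of $i$, so Eve's output law is message-independent up to her residual information about $K$ from the first segment, which is controlled by a single total-variation/strong-secrecy estimate independent of the number of messages---no union bound over message pairs is needed. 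Finally, your converse sketch for the case $C_S(W,V)=0$ (sublinear KL growth plus a soft-covering construction yielding a positive secrecy rate) is not a proof as stated: the passage from pairwise indistinguishability of doubly exponentially many output measures at Eve to a positive-rate secure transmission code is exactly the hard part and is not ``standard''; note also that the paper's proof section does not treat this direction (nor the upper bound $C_{\text{SID}}\le C(W)$, which you correctly reduce to Theorem~\ref{theorem:randomizedID}), so nothing there can be leveraged for it.
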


In this section, Theorem \ref{theorem:DichotomyTheorem} is extended to the DTPC case.

\begin{theorem}
Let \( C_{\text{SID}}(W, V, P_{\max}, P_{\text{avg}}) \) be the secure identification capacity of the wiretap channel \( (W, V) \) and then:

\begin{equation}
\begin{aligned}
C_{\text{SID}}&(W, V, P_{\max}, P_{\text{avg}}) = \\
&\begin{cases}
C(W, P_{\max}, P_{\text{avg}}) & \text{if } C_{\text{S}}(W, V, P_{\max}, P_{\text{avg}}) > 0, \\
0 & \text{if } C_{\text{S}}(W, V, P_{\max}, P_{\text{avg}}) = 0.
\end{cases}
\end{aligned}
\end{equation}



\end{theorem}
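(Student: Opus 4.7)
The plan is to mirror the dichotomy theorem (Theorem~\ref{theorem:DichotomyTheorem}) with the necessary adjustments to accommodate the peak and average power constraints \eqref{eq:peakPower}--\eqref{eq:averagePower}. The proof splits naturally into a converse, the trivial case $C_S = 0$, and the non-trivial achievability case $C_S > 0$.

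For the converse, any secure RI code is a fortiori an RI code for the main channel $W$, so
\[
C_{\text{SID}}(W,V,P_{\max},P_{\text{avg}}) \;\leq\; C_{\text{ID}}(W,P_{\max},P_{\text{avg}}) \;=\; C(W,P_{\max},P_{\text{avg}}),
\]
where the equality is Theorem~\ref{theorem:randomizedID}. If instead $C_S(W,V,P_{\max},P_{\text{avg}}) = 0$, then under the degraded structure of the DTPWC Eve's output is at least as informative as Bob's for every admissible input distribution, so any identification decoder Bob can implement may be emulated by Eve. This directly contradicts the secrecy condition \eqref{eq:Eve} at any positive rate, forcing $C_{\text{SID}} = 0$.

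The main achievability argument for $C_S > 0$ concatenates a strong-secrecy wiretap code with the Ahlswede--Dueck RI code. Fix $R < C(W,P_{\max},P_{\text{avg}})$ and $\varepsilon > 0$. I would first use a vanishing prefix of $n_1 = o(n)$ channel uses to transmit a secret key $K_s$ via the strong-secrecy wiretap code established in Section~\ref{strong}; because $C_S > 0$, this stage delivers $K_s$ reliably to Bob while ensuring $I(K_s;Z^{n_1}) \leq \varepsilon$. On the remaining $n_2 = n-n_1$ channel uses Alice deploys an $(n_2, 2^{2^{n_2 R}}, \lambda, \lambda)$ RI code furnished by Theorem~\ref{theorem:randomizedID}, but encrypts the local randomization (the hash-key component of the Ahlswede--Dueck construction) by one-time-pad addition of $K_s$ modulo the key-set size. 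Bob, after recovering $K_s$, decrypts and applies the standard RI decoder, so errors of the first and second kind are preserved up to an additive $o(1)$. For the secrecy condition, conditioned on $K_s$ Eve's view of the encrypted hash key is asymptotically uniform and independent of the ID message $M$; chaining this with $I(K_s;Z^{n_1}) \to 0$ yields $I(M;Z^n) \to 0$, i.e.\ strong (hence semantic) secrecy in the sense of \eqref{eq:strongsecrecy}. Since $n_1/n \to 0$, the effective ID rate tends to $R$, and letting $R \uparrow C(W,P_{\max},P_{\text{avg}})$ completes the achievability.

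The main technical obstacle is ensuring that the secrecy analysis of the concatenation is fully compatible with the power constraints. Concretely, I must verify (i) that the wiretap prefix can be chosen so that the concatenated input still obeys \eqref{eq:peakPower} and \eqref{eq:averagePower}, which follows from using admissible input distributions in both stages and the sub-linear length of the prefix; (ii) that the strong-secrecy guarantee of Section~\ref{strong} transfers from the bit-transmission level to the identification quantity $I(M;Z^n)$, which is the data-processing step that underlies the semantic-secrecy condition in \eqref{eq:Eve}; and (iii) that only a vanishing-rate secret key is needed to secure the RI code, which is exactly where identification is strictly easier than transmission and is the reason that $C_{\text{SID}}$ equals $C(W,P_{\max},P_{\text{avg}})$ rather than the smaller $C_S(W,V,P_{\max},P_{\text{avg}})$. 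With these three points in place, the Poisson-specific content enters only through the power-constrained capacity expression and the strong-secrecy argument of Section~\ref{strong}.
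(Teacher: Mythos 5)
Your proposal reaches the right dichotomy and is sound in outline, but the achievability mechanism is genuinely different from the paper's. The paper concatenates a long transmission code of rate $C(W,P_{\max},P_{\text{avg}})-\epsilon$ over $n$ channel uses (carrying the randomization index) with a short wiretap code of blocklength $\lceil\sqrt{n}\rceil$ and rate $\epsilon$ that encodes the message-dependent color \emph{directly and securely}; the secrecy requirement \eqref{eq:Eve} is then verified by bounding the total variation between Eve's output measures for different messages, where the continuous-input Poisson wiretap channel is first approximated by a discretized wiretap channel so that the known discrete-channel bound transfers, giving $\left| Q_i V^m(\varepsilon) - Q_j V^m(\varepsilon) \right| \leq \epsilon + 2\delta'$ uniformly in $\varepsilon$, with the power constraints imposed codeword-by-codeword on both component codes. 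You instead deliver a secret key over an $o(n)$ prefix via the strong-secrecy argument of Section~\ref{strong} and one-time-pad the message-dependent part of an ordinary RI code from Theorem~\ref{theorem:randomizedID}. Both routes rest on the same structural fact, namely that only a sub-linear secure component is needed, so any $C_S>0$ suffices and the rate equals $C(W,P_{\max},P_{\text{avg}})$ rather than $C_S$; your version separates reliability from secrecy more cleanly and avoids building a wiretap code for the color alphabet, while the paper's version avoids any key-distribution step and verifies the operational secrecy criterion \eqref{eq:Eve} directly.

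Two points in your write-up need tightening. First, the object you encrypt must be the color (hash value) $T_i(J)$, i.e.\ the only message-dependent part of the transmitted word, and not the local randomization $J$ itself: your phrase ``encrypts the local randomization (the hash-key component)'' is ambiguous, and padding $J$ would require a key of linear rate, which destroys your $o(n)$-prefix argument and your point (iii). Second, the SRI code definition requires the pairwise hypothesis-testing condition \eqref{eq:Eve} for every region $\varepsilon$ and every $i\neq j$, whereas you argue via $I(M;Z^n)\to 0$ as in \eqref{eq:strongsecrecy}; you should add the short step that exact uniformity of the padded color given the key, combined with the vanishing leakage about the key from the prefix, bounds the total variation between Eve's output distributions under distinct messages uniformly over $\varepsilon$ — this is exactly the quantity the paper bounds. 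Finally, note that the paper's proof section treats only the $C_S>0$ achievability and this secrecy verification (the converse is inherited from Theorem~\ref{theorem:randomizedID} and the dichotomy structure from Theorem~\ref{theorem:DichotomyTheorem}); your emulation argument for the $C_S=0$ case is plausible for the degraded DTPWC but is additional material not carried out in the paper and would need its own justification.
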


{\section{Strong Secrecy}\label{strong}
This section establishes strong secrecy for the DTPWC. The analysis focuses on the degraded case of the DTPWC, and the resulting capacity corresponds to the strong secrecy capacity under this assumption.

The per-letter peak constraint in \ref{eq34} does not mean that the transmitter uses vanishing power as the blocklength $n$ increases. It is only a technical condition to keep the input intensity bounded for each channel use, while the average power still follows the constraint in \ref{eq:powermax}. This assumption simplifies the strong secrecy proof and ensures that information leakage remains negligible. The secrecy strategy is therefore based on codeword randomization, not on reducing transmission power.

As mentioned, the secrecy capacity $C_s$ is defined Equation \ref{eq:Cs}.
Operate at transmission rates

$$
  R < C_s,
$$

to guarantee reliable and secure communication.


By applying the chain rule of mutual information, the expression is decomposed as follows:

\begin{equation}
  I(M;Z^n) = \sum_{i=1}^n I(M; Z_i \mid Z^{i-1}).
\end{equation}

For memoryless channels, where $Z_i$ depends only on $X_i(M)$, this simplifies to:

\begin{equation}
  I(M; Z^n)= \sum_{i=1}^n I(M; Z_i).
\end{equation}


Since

\begin{equation}
  Z_i \sim \text{Poisson}(X_i + \lambda_E),
\end{equation}

by data processing and KL-divergence properties,

\begin{equation}
  I(M; Z_i) \leq \mathbb{E} \bigl[ D( P_{Z_i|X_i} \| P_{Z_i} ) \bigr].
\end{equation}

Using the KL-divergence between two Poisson distributions, one can bound

\begin{equation}
  D\bigl(P_{Z_i|X_i} \| P_{Z_i}\bigr) \leq \frac{\mathbb{E}[X_i^2]}{2(\lambda_E + \mathbb{E}[X_i])}.
\end{equation}

Under a peak constraint

\begin{equation}
\label{eq34}
  X_i \leq \epsilon \lambda_E,
\end{equation}

this decays roughly as

\begin{equation}
  O(\epsilon^2 \lambda_E).
\end{equation}


Summing over all $n$ symbols gives

\begin{equation}
  I(M; Z^n) \leq n \cdot O(\epsilon^2 \lambda_E).
\end{equation}

By choosing

\begin{equation}
  \epsilon = O\left(\frac{1}{n}\right),
\end{equation}

one obtains

\begin{equation}
  I(M; Z^n) \leq O\left(\frac{\lambda_E}{n}\right) \to 0 \quad \text{as} \quad n \to \infty,
\end{equation}

meaning asymptotic secrecy is achieved.



Show that the total variation (TV) distance between these two distributions satisfies

\begin{equation}
  \text{TV}(P_{Z^n}, P_{Z^n}^0) \to 0 \implies I(M; Z^n) \to 0,
\end{equation}

confirming the output at Eve is indistinguishable from pure noise.


Pinsker’s inequality relates KL-divergence and total variation distance:

\begin{equation}
  \text{TV}(P_{M,Z^n}, P_M P_{Z^n}) \leq \frac{1}{2} \sqrt{D(P_{M,Z^n} \| P_M P_{Z^n})}.
\end{equation}

Since

\begin{equation}
  D(P_{M,Z^n} \| P_M P_{Z^n}) = I(M; Z^n) \to 0.
\end{equation}

The total variation distance also vanishes, proving asymptotic independence between message $M$ and Eve’s observation $Z^n$.

As a consequence of Theorem \ref{eq26}, the following corollary
characterizes the identification capacity of the state-dependent
DTPWC.

\begin{corollary}[Strong Secrecy Identification Capacity for DTPWC]
For theDTPWC, the identification capacity under the strong secrecy criterion is equal to the transmission capacity. Specifically, it holds that:
\[
C_{\text{SID}}^{\text{strong}}(W, P_{\text{max}}, P_{\text{avg}}) = C(W, P_{\text{max}}, P_{\text{avg}}),
\]
where the channel capacity $C(W, P_{\text{max}}, P_{\text{avg}})$ is given by
\[
C(W, P_{\text{max}}, P_{\text{avg}}) = \max_{\substack{P_X \in \mathcal{P}(\mathcal{X}) \\ X \in \mathcal{X}(P_{\text{max}}, P_{\text{avg}})}} I(X; Y).
\]
\end{corollary}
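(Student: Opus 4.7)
The plan is to derive the corollary as a specialization of the Poisson version of Theorem~\ref{theorem:DichotomyTheorem} (stated just before Section~\ref{strong}), upgraded from weak to strong secrecy using the machinery of Section~\ref{strong}. The corollary implicitly operates in the non-degenerate regime $C_S(W,V,P_{\max},P_{\text{avg}}) > 0$, in which the dichotomy theorem asserts $C_{\text{SID}}(W,V,P_{\max},P_{\text{avg}}) = C(W,P_{\max},P_{\text{avg}})$ under weak secrecy. The task therefore reduces to showing that this equality persists when the weak-secrecy condition $\tfrac{1}{n} I(M;Z^n)\to 0$ is replaced by the strong-secrecy condition $I(M;Z^n)\to 0$.

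For the converse, I would observe that any strong-secrecy RI wiretap code is in particular an ordinary RI code for $W$, so by Theorem~\ref{theorem:randomizedID}
\[
C_{\text{SID}}^{\text{strong}}(W,P_{\max},P_{\text{avg}}) \;\leq\; C_{ID}(W,P_{\max},P_{\text{avg}}) \;=\; C(W,P_{\max},P_{\text{avg}}).
\]
For achievability, I would fix $R < C(W,P_{\max},P_{\text{avg}})$, take the RI wiretap code delivered by the Poisson dichotomy theorem at rate $R$ (which already satisfies weak secrecy), and then invoke the recipe of Section~\ref{strong} to refine the leakage estimate: apply the chain rule $I(M;Z^n)=\sum_{i=1}^n I(M;Z_i\mid Z^{i-1})$, use memorylessness to reduce each term to $I(M;Z_i)\leq \mathbb{E}[D(P_{Z_i\mid X_i}\|P_{Z_i})]$, insert the Poisson KL bound $D(P_{Z_i\mid X_i}\|P_{Z_i})\leq \mathbb{E}[X_i^2]/(2(\lambda_E+\mathbb{E}[X_i]))$, and enforce the auxiliary per-letter amplitude control $X_i\leq \epsilon\lambda_E$ with $\epsilon=O(1/n)$, so that the summed leakage is $O(\lambda_E/n)\to 0$. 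Pinsker's inequality then converts vanishing KL into vanishing total variation between $P_{M,Z^n}$ and $P_M\,P_{Z^n}$, certifying strong (in fact semantic) secrecy.

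The main obstacle will be reconciling the $O(1/n)$ auxiliary amplitude control required by the Poisson leakage bound with the achievability of the full Shannon rate $C(W,P_{\max},P_{\text{avg}})$: naively shrinking every codeword amplitude would collapse the rate. The way around this is to exploit the local randomization that is intrinsic to the Ahlswede--Dueck construction, choosing the codebook so that for each message the induced marginal on $Z^n$ is divergence-close to the codebook-average output distribution, while the per-codeword intensities still respect $P_{\max}$ and $P_{\text{avg}}$. Once this construction is in place, the strong-secrecy identification capacity matches the Shannon capacity $\max_{P_X\in\mathcal{P}(\mathcal{X}),\,X\in\mathcal{X}(P_{\max},P_{\text{avg}})} I(X;Y)$, completing the proof of the corollary.
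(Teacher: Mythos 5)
Your proposal follows essentially the same route as the paper: the corollary is obtained as a consequence of the Poisson dichotomy theorem combined with the Section~\ref{strong} leakage argument (chain rule, memoryless reduction, the Poisson Kullback--Leibler bound with the $O(1/n)$ per-letter intensity control, and Pinsker's inequality), and your explicit converse via Theorem~\ref{theorem:randomizedID} merely makes precise what the paper leaves implicit. The rate-versus-amplitude tension you flag is handled in the paper only by the same appeal to codeword randomization rather than reduced transmission power (and, in the achievability of the dichotomy theorem, by confining the wiretap-protected part to a $\lceil\sqrt{n}\rceil$-length suffix), so your sketch is no less complete than the paper's own argument.
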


}

\section{Proof of Theorem \ref{theorem:randomizedID}} \label{sec:Proof}
In this section, we show that the RI capacity of the DTPC $W$ coincides with its transmission capacity. Indeed, it has been shown in \cite[Corollary 6.6.1]{HanBook} that if a channel $W$ satisfies the strong converse property under some cost constraint, then the corresponding RI capacity and transmission capacity coincide.   
In the following, we prove that the DTPC under peak power constraint $\pmax$ and average power constraint $\pavg$ satisfies the strong converse property. Let $\mathcal{U}(\pmax,\pavg)$ denote the set of all input processes $\mathbf{X}=\{X^n\}_{n=1}^\infty$ satisfying 
\begin{equation*}
    \Pr\{ X^n \in \setx^n(\pmax,\pavg)\}=1
\end{equation*}
for all $n=1,2, \ldots$. Then, as stated in \cite[Theorem 3.7.1]{HanBook}, the DTPC satisfies the strong converse property under cost constraints $\pmax$ and $\pavg$ if and only if 
\begin{align}
    \sup_{ \mathbf{X} \in \mathcal{U}(\pmax,\pavg)} \underline{I}(\mathbf{X},\mathbf{Y}) =  \sup_{ \mathbf{X} \in \mathcal{U}(\pmax,\pavg)}  \overline{I}(\mathbf{X},\mathbf{Y}).
    \label{eq:Strong}
\end{align}
By applying \cite[Theorem 3.6.1]{HanBook}, the transmission capacity of the DTPC $W$ under peak and average power constraints can be rewritten as follows:
\begin{equation*}
    C(W,\pmax,\pavg) = \sup_{ \mathbf{X} \in \mathcal{U}(\pmax,\pavg)} \underline{I}(\mathbf{X},\mathbf{Y}).
\end{equation*}
Thus, in order to develop the strong converse property for the DTPC $W$ under cost constraints $\pmax$ and $\pavg$, in view of \cite[Theorem 3.6.1]{HanBook} it suffices to show 
\begin{align}
    \sup_{ \mathbf{X} \in \mathcal{U}(\pmax,\pavg)}  \overline{I}(\mathbf{X},\mathbf{Y}) & \leq \sup_{\substack{P_X \in \mathcal{P}(\setx) \nonumber \\ X \in \setx(P_{\text{max}},P_{\text{avg}}) }} I(X;Y) \\
    & =\max_{\substack{P_X \in \mathcal{P}(\setx) \\ X \in \setx(P_{\text{max}},P_{\text{avg}})}} I(X;Y) \nonumber \\
    & = C(W,\pmax,\pavg).\label{eq:StrongConverseProperty}
\end{align}

Let $P_X \in \mathcal{P}\left(\setx(\pmax,\pavg)\right)$ an arbitray input distribution and $P_Y \in \mathcal{P}(\sety)$ the corresponding output distribution via the DTPC $W$. Then, for a fixed input sequence $x^n=(x_1,x_2,\ldots,x_n) \in \setx^n(\pmax,\pavg)$, we define $I({Y},x)$ as follows
\begin{align*}
    I({Y},x) & =\log \frac{W(Y|x)}{P_{{Y}}(Y)}.
\end{align*}
The conditional expectation of $I({Y},x)$ given $X=x$ is given by
\begin{align}
    \mathbb{E}[I({Y},x)]&= \sum_{y=0}^\infty W(y|x) \log \frac{W(y|x)}{P_{{Y}}(y)} \nonumber \\
    &= \infdiv{W(\cdot|x)}{P_{{Y}}}. \label{eq:expectationYi}
\end{align}
The mutual information induced by the input distribution $P_{X}$ is given by
\begin{align*}
    I(X,Y)&= \int_{0}^{\pmax} P_X(x)  \mathbb{E}[I({Y},x)] dx \\
    &=  \int_{0}^{\pmax} P_X(x)  \infdiv{W(\cdot|x)}{P_{{Y}}} dx.
\end{align*}

In the following, we define the Lagrangian function $L(\mu,x,P_{{X}})$ with Lagrange multiplier $\mu \geq 0$ as
\begin{align}
    L(\mu,x,P_{{X}}) &= I({X},{Y}) + \mu(x-\pavg)\nonumber\\ 
    & \quad -\infdiv{W(\cdot|x)}{P_{{Y}}}.
\end{align}
It has been shown in \cite{shamaiCapacityAchieving} and \cite[Theorem 5]{capacityAchievingDistribution} that Kuhn-Tucker conditions yield the following theorem.
\begin{theorem}[\cite{capacityAchievingDistribution}]
The distribution $P_{{X}} \in \mathcal{P}\left( \setx(\pmax,\pavg)\right)$ is capacity-achieving iff for some $\mu \geq 0$, the following conditions are satisfied 
\begin{align}
      L(\mu,x,P_{{X}})& \geq 0, \quad x \in [0,\pmax], \\
        L(\mu,x,P_{{X}}) & = 0, \quad  x \in \Phi_x,
\end{align}
where $\Phi_x$ is the support of $P_X$. \label{theorem:Lagrange}
\end{theorem}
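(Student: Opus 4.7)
The plan is to recognize that the right-hand side of \eqref{eq:ShannonCapacityDTPC} is a concave maximization problem over a convex set and then apply a standard KKT / Lagrangian argument for measure-valued variables. More precisely, the map $P_X \mapsto I(X;Y)$ is concave (as a functional of $P_X$ for the fixed channel $W$), the peak-power restriction $\mathrm{supp}(P_X) \subseteq [0,P_{\max}]$ defines a convex set, and the average-power constraint $\int x\, dP_X(x) \le P_{\mathrm{avg}}$ together with the normalization $\int dP_X = 1$ and non-negativity $P_X \ge 0$ are linear (hence convex). Existence of an optimizer $P_X^{\star}$ follows from weak-$^*$ compactness of probability measures on $[0,P_{\max}]$ and upper semicontinuity of $I(X;Y)$ in $P_X$, which legitimizes writing $\max$ rather than $\sup$ as already used in \eqref{eq:ShannonCapacityDTPC}.

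The first step is to compute the Gateaux derivative of $I(X;Y)$ at $P_X^\star$ in the direction of a point mass $\delta_{x_0}$. Writing $P_X^\varepsilon = (1-\varepsilon) P_X^\star + \varepsilon \delta_{x_0}$ with corresponding output $P_Y^\varepsilon = (1-\varepsilon) P_Y^\star + \varepsilon W(\cdot|x_0)$, a direct calculation using $\sum_y W(y|x_0)=1$ gives
\begin{equation*}
\left.\frac{d}{d\varepsilon} I(X^\varepsilon;Y^\varepsilon)\right|_{\varepsilon=0^+} \;=\; \infdiv{W(\cdot|x_0)}{P_Y^\star} \;-\; I(X^\star;Y^\star).
\end{equation*}
The second step is to form the Lagrangian $J(P_X) = I(X;Y) - \mu\bigl(\mathbb{E}[X]-P_{\mathrm{avg}}\bigr)$ for $\mu \ge 0$ and to invoke Lagrangian sufficiency: by concavity of $I(X;Y)$ and convexity of the constraint set, a pair $(P_X^\star,\mu)$ is optimal iff $P_X^\star$ maximizes $J$ over all probability measures on $[0,P_{\max}]$ (dropping the average constraint) and complementary slackness $\mu\bigl(\mathbb{E}_{P_X^\star}[X]-P_{\mathrm{avg}}\bigr)=0$ holds.

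The third step is to translate the unconstrained maximization of $J$ into a pointwise condition. A probability measure $P_X^\star$ on $[0,P_{\max}]$ maximizes $J$ iff the Gateaux derivative in every admissible direction $\delta_{x_0}-P_X^\star$ is non-positive, which combined with the derivative formula above yields, for every $x_0 \in [0,P_{\max}]$,
\begin{equation*}
\infdiv{W(\cdot|x_0)}{P_Y^\star} - I(X^\star;Y^\star) - \mu\bigl(x_0 - \mathbb{E}_{P_X^\star}[X]\bigr) \;\le\; 0,
\end{equation*}
with equality for $x_0 \in \Phi_x$ (otherwise one could push a small mass onto $x_0$ and strictly increase $J$, contradicting optimality). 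Invoking complementary slackness, either $\mu=0$ or $\mathbb{E}_{P_X^\star}[X]=P_{\mathrm{avg}}$; in both cases $\mu\bigl(x_0 - \mathbb{E}_{P_X^\star}[X]\bigr) = \mu\bigl(x_0 - P_{\mathrm{avg}}\bigr)$, so after rearranging and recalling the definition of $L(\mu,x,P_X)$, the inequality becomes $L(\mu,x_0,P_X^\star) \ge 0$ with equality on $\Phi_x$. The converse direction (sufficiency) follows because under the KKT conditions the concave functional $J$ has no improving direction, so $P_X^\star$ is a global maximizer of $J$, hence of $I(X;Y)$ on the feasible set. The main obstacle is the rigorous handling of the variational calculation on the continuous input alphabet $[0,P_{\max}]$, in particular justifying the interchange of derivative and integration that produces the derivative formula above and verifying that testing against point masses $\delta_{x_0}$ is enough to capture all admissible directions; both are standard but must be checked to legitimately invoke the KKT characterization in the measure-theoretic setting.
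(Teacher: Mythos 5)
The paper contains no proof of this statement: it is imported from the cited references (\cite{capacityAchievingDistribution}, Theorem~5, together with \cite{shamaiCapacityAchieving}) with the one-line remark that Kuhn--Tucker conditions yield it. Your proposal is therefore a reconstruction rather than an alternative to anything in the paper, and it is a faithful one --- it follows exactly the standard route of those references (going back to Smith's treatment of the amplitude-constrained Gaussian channel): concavity and weak-$^*$ upper semicontinuity of $P_X\mapsto I(X;Y)$ over probability measures on $[0,\pmax]$, the Gateaux derivative $\infdiv{W(\cdot|x_0)}{P_Y^\star}-I(X^\star;Y^\star)$ in the direction $\delta_{x_0}-P_X^\star$ (your observation that the second term vanishes because $\sum_y\bigl(W(y|x_0)-P_Y^\star(y)\bigr)=0$ is the correct key step), Lagrangian treatment of the average-power constraint, and translation into the pointwise condition $L(\mu,x,P_X)\ge 0$ with equality on the support. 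Two points should be made explicit to close the argument. First, the necessity direction (optimality $\Rightarrow$ existence of $\mu\ge 0$) requires a constraint qualification before you may assert the ``iff'' form of Lagrangian sufficiency; Slater's condition holds trivially here (e.g.\ $\delta_0$ is strictly feasible for the average-power constraint), but it must be invoked. Second, complementary slackness is not one of the two conditions stated in the theorem, so in the sufficiency direction you should note that it is implied by them: integrating $L(\mu,x,P_X)=0$ over $\Phi_x$ against $P_X$ gives $\mu\bigl(\mathbb{E}[X]-\pavg\bigr)=0$, after which the chain
\begin{equation*}
I(Q_X;Y)\;\le\;\int Q_X(dx)\,\infdiv{W(\cdot|x)}{P_Y^\star}\;\le\; I(X^\star;Y^\star)+\mu\bigl(\mathbb{E}_{Q_X}[X]-\pavg\bigr)\;\le\; I(X^\star;Y^\star)
\end{equation*}
for every feasible $Q_X$ (the first inequality by $D(Q_Y\Vert P_Y^\star)\ge 0$, the second by $L\ge 0$, the third by feasibility and $\mu\ge 0$) establishes global optimality. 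With these two additions the proposal is correct and consistent with the cited source.
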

It has been shown in \cite{shamaiCapacityAchieving} that the capacity-achieving input distribution is unique and discrete with a finite number of mass points for finite peak power and average power constraints. Let $P_{\bar{X}}$ denote the capacity-achieving input distribution of the DTPC under peak power and average power constraints $\pmax$ and $\pavg$, respectively. Therefore, as described in \cite{capacityAchievingDistribution} and w.l.o.g. the input distribution $P_{\bar{X}}$ is given by
\begin{align}
    P_{\bar{X}}(x) & = p_1 \delta(x-\bar{x}_1)+ p_2 \delta(x-\bar{x}_2)+\cdots + p_m\nonumber \\
    & \quad \cdot \delta(x-\bar{x}_m), \quad x \in \setx(\pmax,\pavg),
\end{align}
where $\delta(\cdot)$ denotes the Dirac impulse function, $\bar{\Phi}_x=\{\bar{x}_1,\bar{x}_2,\ldots,\bar{x}_m\}, \quad 0\leq \bar{x}_1< \bar{x}_2<\ldots<\bar{x}_m\leq \pmax $ is a finite input constellation and $\Phi_p=\{p_1,p_2,\ldots,p_m\}$ is the set of the corresponding  probability masses. Let $P_{\bar{Y}}$ denote the output distribution of the DTPC $W$ corresponding to the distribution $P_{\bar{X}}$. For all $y \in \sety$, we have
\begin{align}
    P_{\bar{Y}}(y) & = \int_{x \in \setx(\pmax,\pavg)} P_{\bar{X}}(x) W(y|x) dx \nonumber\\
    & = \int_{x \in \setx(\pmax,\pavg)} \big( \sum_{j=1}^m p_j \delta(x-\bar{x}_j) \big) W(y|x) dx \nonumber \\
    &= \sum_{j=1}^m p_j \int_{x \in \setx(\pmax,\pavg)} W(y|x) \delta(x-\bar{x}_j) dx \nonumber\\
    & = \sum_{j=1}^m p_j W(y|\bar{x}_j). \label{eq:pY}
\end{align}
Let $\bar{X}$ and $\bar{Y}$ be two RVs with probability distribution functions $P_{\bar{X}}$ and $P_{\bar{Y}}$ on $\setx(\pmax,\pavg)$ and $\sety$, respectively. In view of \eqref{eq:expectationYi}, the capacity $ C(W,\pmax,\pavg)$ of the DTPC $W$ can be rewritten as follows:
\begin{align*}
     &C(W,\pmax,\pavg)\\
      & \quad =\max_{\substack{P_X \in \mathcal{P}(\setx) \\ X \in \setx(P_{\text{max}},P_{\text{avg}})}} I(X;Y) \\
     &\quad = I(\bar{X},\bar{Y}) \\
     &\quad = \int_{x \in \setx(\pmax,\pavg)} P_{\bar{X}}(x) \sum_{y=0}^{\infty} W(y|x)(y|x_i) \log \frac{W(y|x)}{P_{\bar{Y}}(y)} dx \\
      & \quad = \int_{x \in \setx(\pmax,\pavg)} P_{\bar{X}}(x)\infdiv{W(\cdot|x)}{P_{\bar{Y}}} dx \\
     &= \quad \sum_{j=1}^m p_j \infdiv{W(\cdot|\bar{x}_j)}{P_{\bar{Y}}}.
\end{align*}
It holds for each blocklength $n$ that
\begin{align*}
    \frac{1}{n} \log \frac{W^n(Y^n|X^n)}{P_{\bar{Y}^n}(Y^n)} &= \frac{1}{n} \sum_{t=1}^n \log \frac{W(Y_t|X_t)}{P_{\bar{Y}}(Y_t)}.
\end{align*}
For fixed $x^n=(x_1,x_2,\ldots,x_n) \in \setx^n(\pmax,\pavg)$, we define
\begin{align*}
    I(\bar{Y}_t,x_t) &= \log \frac{W(Y_t|X_t)}{P_{\bar{Y}}(Y_t)}.
\end{align*}
 In addition, we have for all $t=1,2,\ldots, n$
\begin{align*}
    \mathbb{E}[I(\bar{Y}_t,x_t)] & = \infdiv{W(\cdot|x_t)}{P_{\bar{Y}}}.
\end{align*}
In order to show \eqref{eq:StrongConverseProperty}, we first prove the following inequality:
\begin{align}
    \mathbb{E}[\frac{1}{n} \sum_{t=1}^n I(\bar{Y}_i|x_i)] \leq C(W,\pmax,\pavg). \label{eq:StronConverse1}
\end{align}
It follows from Theorem \ref{theorem:Lagrange} that for $x \in [0,\pmax]$
\begin{align*}
     &L(\mu,x,P_{\bar{X}})\\
     &= I(\bar{X},\bar{Y}) + \mu(x-\pavg)-\infdiv{W(\cdot|x)}{P_{\bar{Y}}} \\
     &= C(W,\pmax,\pavg) + \mu(x-\pavg)-\infdiv{W(\cdot|x)}{P_{\bar{Y}}} \\
     & \geq 0. 
\end{align*}
As the sequence $x^n$ satisfies the peak power and average power constraints, we have
\begin{align}
  \mathbb{E}[I(\bar{Y},x)]= \infdiv{W(\cdot|x)}{P_{\bar{Y}}} & \leq  C(W,\pmax,\pavg). 
  \label{eq:expectationBound}
\end{align}
It yields that
\begin{align*}
\mathbb{E}[\frac{1}{n} \sum_{t=1}^n I(\bar{Y}_t,x_t)] & \leq C(W,\pmax,\pavg).
\end{align*}
This completes the proof of inequality \eqref{eq:StronConverse1}.\\
Next, we want to show that the variance of $I(\bar{Y}_t,x_t),\quad t=1,\ldots,n$ w.r.t. the conditional probability mass function $W(\cdot|x_t)$ for fixed $x_t \in \setx(\pmax,\pavg)$. For this purpose, we prove that $\mathbb{E}\big[(I(\bar{Y}_t,x_t))^2\big],\quad t=1,\ldots,n$ is finite. For notational simplicity, we will drop the index $t$. 
Let $\sety_1$ and $\sety_2$ be defined as
\begin{equation}
\begin{aligned}
\sety_1 & = \left\{ y \in \sety \colon 
\frac{P_{\bar{Y}}(y)}{W(y|x)} > 1 \right\}, \\
\sety_2 & = \left\{ y \in \sety \colon 
\frac{P_{\bar{Y}}(y)}{W(y|x)} \le 1 \right\}.
\end{aligned}
\end{equation}

We have
\begin{align}
\mathbb{E}\!\left[(I(\bar{Y},x))^2\right]
&= \sum_{y=0}^{\infty} W(y|x)\,
   \log^{2}\!\left( \frac{W(y|x)}{P_{\bar{Y}}(y)} \right) \nonumber\\
&= \sum_{y \in \mathcal{Y}_1} W(y|x)\,
   \log^{2}\!\left( \frac{W(y|x)}{P_{\bar{Y}}(y)} \right) \nonumber\\
&\quad + \sum_{y \in \mathcal{Y}_2} W(y|x)\,
   \log^{2}\!\left( \frac{W(y|x)}{P_{\bar{Y}}(y)} \right)
   \label{eq:sum}
\end{align}

We first establish an upper bound on the first term of the sum in \eqref{eq:sum}.
\begin{align}
& \sum_{y \in \sety_1} W(y|x) \log^2 \big( \frac{W(y|x)}{P_{\bar{Y}}(y)}\big) \nonumber\\
 & = \sum_{y\in \sety_1} W(y|x) \log^2 \big( \frac{P_{\bar{Y}}(y)}{W(y|x)}\big) \nonumber \\
 & \overset{(a)}{\leq}  \sum_{y\in \sety_1} W(y|x) \log \big( \frac{P_{\bar{Y}}(y)}{W(y|x)}\big) \cdot \log(e) \cdot \big( \frac{P_{\bar{Y}}(y)}{W(y|x)}\big) \nonumber \\
 &=  \log(e) \sum_{y\in \sety_1}  P_{\bar{Y}}(y) \log \big( \frac{P_{\bar{Y}}(y)}{W(y|x)}\big) \nonumber \\
 & \overset{(b)}{=} \log(e) \bigg(  \infdiv{P_{\bar{Y}}(\cdot)}{W(\cdot|x)} \nonumber \\
 & \quad - \sum_{y\in \sety2}  P_{\bar{Y}}(y) \log \big( \frac{P_{\bar{Y}}(y)}{W(y|x)}\big) \bigg) \nonumber \\
 &\overset{(c)}{ \leq} \log(e) \bigg(  \infdiv{P_{\bar{Y}}(\cdot)}{W(\cdot|x)} \nonumber \\
 & \quad - \sum_{y\in \sety2}  P_{\bar{Y}}(y)  \big( 1- \frac{W(y|x)}{P_{\bar{Y}}(y)}\big) \bigg) \nonumber \\
 & =  \log(e) \bigg(  \infdiv{P_{\bar{Y}}(\cdot)}{W(\cdot|x)} + \sum_{y\in \sety2} W(y|x)-P_{\bar{Y}}(y) \bigg) \nonumber \\
 & \overset{(d)}{=} \log(e) \bigg(  \infdiv{P_{\bar{Y}}(\cdot)}{W(\cdot|x)} + \frac{1}{2} d\left(P_{\bar{Y}}(\cdot),W(\cdot|x)\right) \bigg) \nonumber \\
 & \overset{(e)}{\leq}  \log(e) \bigg(  \infdiv{P_{\bar{Y}}(\cdot)}{W(\cdot|x)} +1 \bigg)\nonumber \\
 & \overset{(f)}{=}  \log(e) \bigg( \infdiv{\sum_{j=1}^m p_j W(\cdot| \bar{x}_j)}{\sum_{j=1}^m p_j W(\cdot|x)} +1 \bigg) \nonumber \\
 & \overset{(g)}{\leq} \log(e)  \bigg(\sum_{j=1}^m p_j  \infdiv{W(\cdot|\bar{x}_j}{W(\cdot|x)} + 1 \bigg) \nonumber \\
 & \overset{(h)}{\leq} \log(e) \bigg( \sum_{j=1}^m p_j \big( (\lambda+ \bar{x}_j) \log \frac{\lambda+ \bar{x}_j}{\lambda+x}  \nonumber\\
 & \quad (\bar{x}_j-x) \big) + 1 \bigg) \nonumber \\
 & \overset{(i)}{\leq} \log(e) \bigg( \sum_{j=1}^m p_j ( \lambda+ \pmax) \log \frac{\lambda+\pmax}{\lambda} + \pmax +1 \bigg) \nonumber \\
 & \overset{(j)}{\leq} \log(e) \bigg( \log(e)\frac{(\lambda+\pmax)^2}{\lambda}+\pmax +1 \bigg), \label{eq:firstterm}
\end{align}
where $(a)$ follows because $\log \big( \frac{P_{\bar{Y}}(y)}{W(y|x)}\big) >0$ and $\log(x) \leq (x-1)\log(e)$, $(b)$ follows from the definition of the Kullback-Leibler divergence, $(c)$ follows because $\log(x)\geq \log(e)(1-\frac{1}{x})$, $(d)$ follows from the definition of the variational distance $d\left(P_{\bar{Y}}(\cdot),W(\cdot|x)\right)$, 
$(e)$ follows because $0\leq d\left(P_{\bar{Y}}(\cdot),W(\cdot|x)\right) \leq 2$,
$(f)$ follows from \eqref{eq:pY}, $(g)$ follows from the convexity of the Kullback-Leibler divergence,
$(h)$ because $W(\cdot|\bar{x}_j), \quad j=1,\ldots,m$ and $W(\cdot|x)$ are Poisson distributed with mean $\lambda+\bar{x}_j$ and $\lambda+x$, respectively, 
$(i)$ follows because $0\leq x, \ \bar{x}_j\leq \pmax$ and $(j)$ follows because $\log(x) \leq (x-1)\log(e)$. \\
Now, we compute an upper bound on the second term of the sum in \eqref{eq:sum}. We have
\begin{align}
& \sum_{y \in \sety_2} W(y|x) \log^2 \big( \frac{W(y|x)}{P_{\bar{Y}}(y)}\big) \nonumber\\
&=  \sum_{y \in \sety_2} W(y|x) \big( \log(W(y|x)) - \log(P_{\bar{Y}}(y)) \big)^2 \nonumber \\
& \overset{(a)}{=}  \sum_{y \in \sety_2} W(y|x) \bigg( \log(W(y|x)) \nonumber \\
& \quad - \log\big(\sum_{j=1}^m p_j W(\cdot| \bar{x}_j)\big) \bigg)^2 \nonumber \\
&\overset{(b)}{\leq}  \sum_{y \in \sety_2} W(y|x) \bigg( \log(W(y|x)) \nonumber \\
& \quad - \sum_{j=1}^m p_j \log\big(W(\cdot| \bar{x}_j)\big)  \bigg)^2 \nonumber \\
& \overset{(c)}{=}  \sum_{y \in \sety_2} W(y|x) \bigg( \big(\log(x+\lambda) -  \sum_{j=1}^m p_j \log(x_j+\lambda) \big)y \nonumber \\
& \quad + \big(  \sum_{j=1}^m p_jx_j - x \big) \bigg)^2 \nonumber \\
&\overset{(d)}{\leq} \sum_{y \in \sety_2} W(y|x) \bigg( \alpha y + \beta \bigg)^2 \nonumber \\
&= \alpha^2\sum_{y \in \sety_2} W(y|x) y^2 + 2 \alpha  \beta \sum_{y \in \sety_2}  W(y|x) y \nonumber \\
& \quad + \beta^2 \sum_{y \in \sety_2} W(y|x) \nonumber \\
& \leq \alpha^2\mathbb{E}[Y^2|X=x]+  2 \alpha\nonumber  \beta \mathbb{E}[Y|X=x] + \beta^2(\lambda,\pmax) \nonumber \\
& \overset{(e)}{\leq}  \alpha^2 (\lambda+\pmax)^2 + 2 \alpha  \beta (\lambda+\pmax) +  \beta^2,\label{eq:secondterm}
\end{align}
where $d(\cdot,\cdot)$ denotes the total variational distance, $(a)$ follows \eqref{eq:pY}, $(b)$ follows from the convexity of $-\log(x)$, $(c)$ follows because $W(\cdot|\bar{x}_j), \quad j=1,\ldots,m$ and $W(\cdot|x)$ are Poisson distributed with mean $\lambda+\bar{x}_j$ and $\lambda+x$, respectively, $(d)$ follows from \eqref{eq:alpha} and \eqref{eq:beta} and $(e)$ follows because given the input $X=x$, the output $Y$ is Poisson distributed with mean $\lambda+x$ and $0<x\leq \pmax$. \\
Let $\alpha$ and $\beta$ be defined as follows:
\begin{align*}
    \alpha &= \log(1+\frac{\pmax}{\lambda} ) \\
    \beta & = \pmax.
\end{align*}
Then, it can be easily shown that 
\begin{align}
\log(x+\lambda) -  \sum_{j=1}^m p_j \log(x_j+\lambda) & \leq \alpha,
\label{eq:alpha}
\end{align}
and 
\begin{align}
    \sum_{j=1}^m p_jx_j - x & \leq \beta. \label{eq:beta}
\end{align}
It follows from \eqref{eq:firstterm} and \eqref{eq:secondterm} that
 \begin{align*}
    &{\text{Var}}[\frac{1}{n} \sum_{t=1}^n I(\bar{Y}_t,x_t)] \nonumber\\
    & \leq \log(e) \bigg( \log(e)\frac{(\lambda+\pmax)^2}{\lambda}+\pmax +1 \bigg) \nonumber \\
    & \quad +  \alpha^2 (\lambda+\pmax)^2 + 2 \alpha  \beta (\lambda+\pmax) +  \beta^2 .\\
\end{align*}
Let $\gamma(\lambda,\pmax)$ be defined as 
\begin{align*}
    \gamma(\lambda,\pmax) & = \log(e) \bigg( \log(e)\frac{(\lambda+\pmax)^2}{\lambda}+\pmax +1 \bigg) \\
    & \quad +  \alpha^2 (\lambda+\pmax)^2 + 2 \alpha  \beta (\lambda+\pmax) +  \beta^2.
\end{align*}
Therefore, Chebyshev's inequality implies
\begin{align}
    &\Pr\big\{ \frac{1}{n} \sum_{t=1}^n I(\bar{Y}_t,x_t) \geq C(W,\pmax,\pavg)+ \nu | X^n=x^n \big \} \nonumber\\
    & \quad \leq  \frac{\gamma(\lambda,\pmax)}{n}, \label{eq:tshebychev}
\end{align}
where $\nu>0$ is an arbitrary constant. As $\mathbf{X}=\{X^n\}_{n=1}^\infty \in \mathcal{U}(\pmax,\pavg)$ is assumed, \eqref{eq:tshebychev} holds for all realizations $x^n$ of $X^n$. Thus, we have
\begin{align*}
    &\Pr\big\{ \frac{1}{n} \sum_{t=1}^n I(\bar{Y}_t,X_t) \geq C(W,\pmax,\pavg)+ \nu  \big \} \\
    & \quad \leq  \frac{\gamma(\lambda,\pmax)}{n}.
\end{align*}
That means
\begin{align*}
    &\Pr\big\{ \frac{1}{n} \log \frac{W^n(Y^n|X^n)}{P_{\bar{Y}^n}(Y^n)} \geq C(W,\pmax,\pavg)+ \nu  \big \} \\ & \quad \leq  \frac{\gamma(\lambda,\pmax)}{n}. 
\end{align*}
We have
\begin{align*}
    &\lim_{n \to \infty} \Pr\big\{ \frac{1}{n} \log \frac{W^n(Y^n|X^n)}{P_{\bar{Y}^n}(Y^n)} \geq C(W,\pmax,\pavg)+ \nu  \big \} \\
    & \quad =0.
\end{align*}
Since $\nu>0$ is chosen arbitrarily, we have
\begin{align}
    \plimsup_{n \to \infty} \frac{1}{n} \log \frac{W^n(Y^n|X^n)}{P_{\bar{Y}^n}(Y^n)} & \leq  C(W,\pmax,\pavg). \label{eq:lastIneq}
\end{align}
It follows from \eqref{eq:lastIneq} and \cite[Lemma 3.2.1]{HanBook} that
\begin{align*}
    \plimsup_{n \to \infty} \frac{1}{n} \log \frac{W^n(Y^n|X^n)}{P_{{Y}^n}(Y^n)} & \leq  C(W,\pmax,\pavg). \label{eq:lastIneq}
\end{align*}
This completes the proof of \eqref{eq:StrongConverseProperty}. Thus, \eqref{eq:Strong} holds, and the strong converse property is proved for the DTPC $W$ under peak and average power constraints $\pmax$ and $\pavg$, respectively. That means that the RI capacity of the DTPC $C_{ID}(W,\pmax,\pavg)$ coincides with its transmission capacity $C(W,\pmax,\pavg)$.

\section{Proof of Theorem \ref{theorem:DichotomyTheorem}}\label{sec:Secureproof}
A transmission code \( C' \) is used to encode the colors, while a wiretap code \( C'' \) is used to encode the color.

$C' = \{(u'_j, \mathcal{D}'_j), j \in \{1, \dots, M'\}$, $u'_j \in \mathcal{X}', \mathcal{D}'_j \subset \mathcal{Y}' \} $, with rate $R'_c = C(W,P_{max}, P_{avg}) - \epsilon$, $ 0 \leq \epsilon \leq C(W,P_{max}, P_{avg})$ and probability of vanishing error $\lambda'_n$, $\lim_{n \to \infty} \lambda'_n = 0$.

\begin{equation}
\sum_{l=1}^{n} u'_{j,l} \le n \cdot P_{avg}+P_{max}.
\end{equation}

\begin{figure*}[htb]
    \centering
    \resizebox{0.6\linewidth}{!}{\tikzset{every picture/.style={line width=0.75pt}} 

\begin{tikzpicture}[x=0.75pt,y=0.75pt,yscale=-1,xscale=1]

\draw   (120.5,146.25) -- (171,146.25) -- (171,346.17) -- (120.5,346.17) -- cycle ;
\draw    (120,195.6) -- (171,195.75) ;
\draw    (120.25,246.14) -- (171.25,246.28) ;
\draw    (120.25,295.85) -- (171.25,296) ;
\draw   (507.83,144.59) -- (558.33,144.59) -- (558.33,344.5) -- (507.83,344.5) -- cycle ;
\draw    (507.33,193.94) -- (558.33,194.09) ;
\draw    (507.58,244.47) -- (558.58,244.62) ;
\draw    (507.58,294.18) -- (558.58,294.33) ;
\draw   (256.67,45.33) -- (386.67,45.33) -- (386.67,92.18) -- (256.67,92.18) -- cycle ;
\draw    (336.47,45.4) -- (336.87,92.11) ;
\draw   (257.07,96.27) .. controls (257.07,100.94) and (259.4,103.27) .. (264.07,103.27) -- (286.27,103.27) .. controls (292.94,103.27) and (296.27,105.6) .. (296.27,110.27) .. controls (296.27,105.6) and (299.6,103.27) .. (306.27,103.27)(303.27,103.27) -- (328.47,103.27) .. controls (333.14,103.27) and (335.47,100.94) .. (335.47,96.27) ;
\draw   (336.87,96.67) .. controls (336.83,101.34) and (339.14,103.69) .. (343.81,103.73) -- (351.61,103.79) .. controls (358.28,103.84) and (361.59,106.2) .. (361.55,110.87) .. controls (361.59,106.2) and (364.94,103.9) .. (371.61,103.95)(368.61,103.93) -- (379.41,104.01) .. controls (384.08,104.05) and (386.43,101.74) .. (386.47,97.07) ;
\draw   (386.17,41.42) .. controls (386.18,36.75) and (383.85,34.42) .. (379.18,34.41) -- (332.66,34.35) .. controls (325.99,34.34) and (322.66,32.01) .. (322.67,27.34) .. controls (322.66,32.01) and (319.33,34.34) .. (312.66,34.33)(315.66,34.33) -- (265.98,34.27) .. controls (261.31,34.26) and (258.98,36.59) .. (258.97,41.26) ;
\draw   (569,344) .. controls (573.67,344) and (576,341.67) .. (576,337) -- (576,254) .. controls (576,247.33) and (578.33,244) .. (583,244) .. controls (578.33,244) and (576,240.67) .. (576,234)(576,237) -- (576,151) .. controls (576,146.33) and (573.67,144) .. (569,144) ;
\draw   (103.67,146.67) .. controls (99,146.64) and (96.65,148.95) .. (96.62,153.62) -- (96.07,236.29) .. controls (96.03,242.96) and (93.68,246.27) .. (89.01,246.24) .. controls (93.68,246.27) and (95.99,249.62) .. (95.94,256.29)(95.96,253.29) -- (95.39,338.95) .. controls (95.36,343.62) and (97.67,345.97) .. (102.34,346) ;
\draw    (170.67,270) -- (252.87,80.83) ;
\draw [shift={(253.67,79)}, rotate = 113.49] [color={rgb, 255:red, 0; green, 0; blue, 0 }  ][line width=0.75]    (10.93,-3.29) .. controls (6.95,-1.4) and (3.31,-0.3) .. (0,0) .. controls (3.31,0.3) and (6.95,1.4) .. (10.93,3.29)   ;
\draw    (506.67,268) -- (390.7,76.71) ;
\draw [shift={(389.67,75)}, rotate = 58.78] [color={rgb, 255:red, 0; green, 0; blue, 0 }  ][line width=0.75]    (10.93,-3.29) .. controls (6.95,-1.4) and (3.31,-0.3) .. (0,0) .. controls (3.31,0.3) and (6.95,1.4) .. (10.93,3.29)   ;

\draw (137.5,214.57) node [anchor=north west][inner sep=0.75pt]    {$u'_{2}$};
\draw (136,163.07) node [anchor=north west][inner sep=0.75pt]    {$u'_{1}$};
\draw (134.5,311.57) node [anchor=north west][inner sep=0.75pt]    {$u'_{M'}$};
\draw (140,255) node [anchor=north west][inner sep=0.75pt]    {$\vdots $};
\draw (524.83,212.9) node [anchor=north west][inner sep=0.75pt]    {$u''_{2}$};
\draw (523.33,161.4) node [anchor=north west][inner sep=0.75pt]    {$u''_{1}$};
\draw (521.83,309.9) node [anchor=north west][inner sep=0.75pt]    {$u''_{M''}$};
\draw (530,255) node [anchor=north west][inner sep=0.75pt]    {$\vdots $};
\draw (288.97,117.67) node [anchor=north west][inner sep=0.75pt]    {$n$};
\draw (344,112.07) node [anchor=north west][inner sep=0.75pt]    {$\left\lceil \sqrt{n} \right\rceil$};
\draw (196.67,2.73) node [anchor=north west][inner sep=0.75pt]    {$m\in \ C=\{( Q_{i} ,\ \mathcal{D}_{i}) ,\ i\ \in \ \{1,\ ...,\ N\}\}$};
\draw (585,229.07) node [anchor=north west][inner sep=0.75pt]    {$\left| C'' \right| = \left\lceil 2^{\sqrt{n} \epsilon} \right\rceil
$};
\draw (422,360) node [anchor=north west][inner sep=0.75pt]    {$C''=\{( u''_{k} ,\ \mathcal{D} ''_{k}) ,\ k\ \in \ \{1,\ ...,\ M''\}\}$};
\draw (20.67,359.4) node [anchor=north west][inner sep=0.75pt]    {$C'=\{( u'_{j} ,\ \mathcal{D} '_{j}) ,\ j\ \in \ \{1,\ ...,\ M'\}\}$};
\draw (-10,237.07) node [anchor=north west][inner sep=0.75pt]    {$\left| C' \right| = \left\lceil 2^{n(C - \epsilon)} \right\rceil
$};
\draw (90,109) node [anchor=north west][inner sep=0.75pt]   [align=left] {Transmission code};
\draw (495,109) node [anchor=north west][inner sep=0.75pt]   [align=left] {Wiretap code};

\end{tikzpicture}} 
    \caption{Code construction for the wiretap channel.}
    \label{fig:CodeConstruction}
\end{figure*}
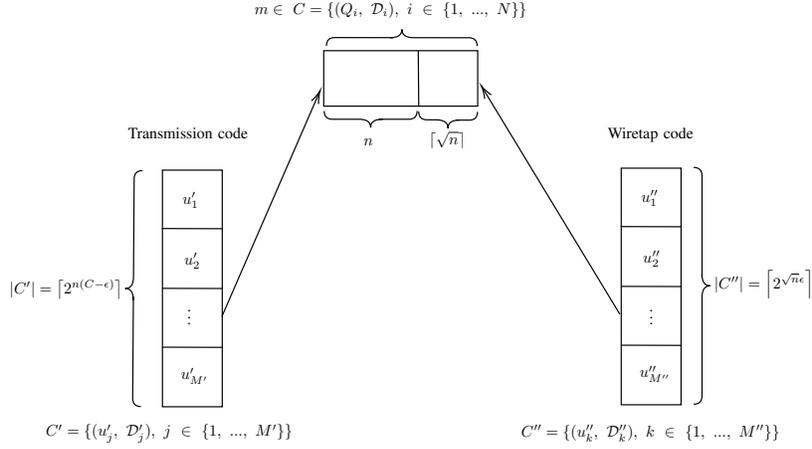

$C'' = \{(u''_k, \mathcal{D}''_k), k \in \{1, \dots, M''\}$, $u''_k \in \mathcal{X}'', \mathcal{D}''_k \subset \mathcal{Y}'' \} $, is a wiretap code with a rate $R''_c = \epsilon, \, 0 \le \epsilon \le C(W,P_{max}, P_{avg})$ and a probability of vanishing error \( \lambda''_n \), \(\lim_{n \to \infty} \lambda''_n = 0.\)

Every codeword \( u''_k = (u''_{k,1}, u''_{k,2}, \ldots, u''_{k,n}) \) fulfills the following power constraint:

\begin{equation}
\sum_{l=1}^{n} u''_{k,l} \leq \lceil \sqrt{n} \rceil \cdot P_{avg}+P_{max}.
\end{equation}

The codes \( C' \) and \( C'' \) are then concatenated to construct an identification code \((m, N, \lambda_1, \lambda_2)\) for \((W, V, P_{max}, P_{avg})\) given by
$C = \{(Q(\cdot|i), \mathcal{D}_i), Q(\cdot|i) \in \mathcal{P}(\mathcal{X}^n(P_{max}, P_{avg})), \mathcal{D}_i \subset \mathcal{Y}^n, i \in \{1, \dots, N\}\}$.

with \(\lambda_1, \lambda_2 \leq \lambda < \frac{1}{2}\) and \(m = n + \lceil \sqrt{n} \rceil\). The goal is to show that:

\begin{multline}
\label{eq:CSID}
C_{SID}(W, V, P_{max}, P_{avg}) = C(W, P_{max}, P_{avg}) \\
\text{if } C_{S}(W, V, P_{max}, P_{avg}) > 0.
\end{multline}

 It must be shown that equation \ref{eq:Eve} holds. For this purpose, it is sufficient to prove that the wiretapper cannot identify the color, i.e., the second fundamental component of the transmitted codeword.

For a discrete wiretap channel, \( Q_i V^p(\varepsilon) \) is defined as follows:
\begin{equation}
Q_i V^p(\varepsilon) \triangleq \sum_{x^p \in X^p} Q(x^p|i) V^p(\varepsilon|x^p).
\end{equation}

For any region \( \varepsilon \in \mathbb{\mathcal Z}^p \) and any \( i \neq j \), the variational distance between the two output measures \( Q_i V^p(\varepsilon) \) and \( Q_j V^p(\varepsilon) \) is upper bound. From the transmission wiretap code, a transmission code can be constructed that satisfies the following inequality:

\begin{equation}
    |Q_i V^p(\varepsilon) - Q_j V^p(\varepsilon)| \leq \epsilon.
\end{equation}

\begin{definition}
For a discrete Poisson wiretap channel, \( Q_i V^p(\varepsilon) \) is defined as follows:

\begin{equation}
    Q_i V^p(\varepsilon) \equiv \int_{x^p \in \mathcal X^p} Q(x^p|i) V^p(\varepsilon|x^p) \, d^p x^p.
\end{equation}
The original Poisson wiretap channel \( (W, V, P_{\text{max}}, P_{\text{avg}}) \) is approximated by a discrete wiretap channel \( (W, V, P_{\text{max}}, P_{\text{avg}}, \mathcal L_x, z_0) \). The channels \( W(P_{\text{max}}, P_{\text{avg}}) \) and \( V(P_{\text{max}}, P_{\text{avg}}) \) are also approximated by the discrete channels \( W(P_{\text{max}}, P_{\text{avg}}, \mathcal L_x, z_0) \) and \( V(P_{\text{max}}, P_{\text{avg}}, \mathcal L_x, z_0) \), respectively. If \( Q_i V^p(\cdot) \) is an output measure on \( V(P_{\text{max}}, P_{\text{avg}}) \), its approximation on \( V(P_{\text{max}}, P_{\text{avg}}, \mathcal L_x, z_0) \) is denoted by \( \hat{Q}_i V^p(\cdot) \). It is recalled that after quantization, the resulting output measure closely approximates the original one.
\begin{equation}
    \label{eq:one}
    \left| Q_i V^p(\cdot) - \hat{Q}_i V^p(\cdot) \right| \leq \delta', \quad \delta' > 0.
\end{equation}

For any region \( \varepsilon \in \mathcal{Z}^p \):
\begin{equation}
    \label{eq:two}
    \left| \hat{Q}_i V^p(\varepsilon) - \hat{Q}_j V^p(\varepsilon) \right| \leq \epsilon.
\end{equation}

\end{definition}

It follows from equations \ref{eq:one} and \ref{eq:two} that:
\begin{equation}
    \label{eq:last}
    \left| Q_i V^m(\varepsilon) - Q_j V^m(\varepsilon) \right| \leq \epsilon + 2\delta'.
\end{equation}
\( \delta' \) is chosen small enough such that \( \epsilon + 2\delta' = \epsilon_1 \) is very small. It is clear that equation \ref{eq:last} implies the last secrecy requirement of an identification code for the DTPWC \( (W, V, P_{\text{max}}, P_{\text{avg}}) \).

\section{Conclusions} \label{sec:conclusion}

In this work, we present an information-theoretic analysis of event-triggered molecular communication. Previous studies known to us have considered this model under the assumption of \emph{deterministic encoding}. In contrast, our analysis assumes that local randomness is available at the sender. While the deterministic case yields $n^{nR_1}$ possible messages, the randomized setting enables $2^{2^{nR_2}}$ possible messages.

Moreover, we derive an exact expression for the RI capacity, whereas in the deterministic identification (DI) case, only upper and lower bounds have been established thus far. Naturally, one may question the realism of assuming available randomness at the sender. A plausible scenario includes a sender located outside the body (e.g., in a wearable device such as a bracelet). Alternatively, randomness may be obtained through additional resources such as feedback.
We further examine a setting where the channel behavior depends on a random state and derive the corresponding capacity.

Additionally, we consider the DTPWC for secure identification. In this context, we perform a security analysis for the degraded wiretap channel. A potential adversary may be modeled as a malicious nanomachine or cell positioned downstream of the legitimate nanomachine, attempting to eavesdrop on the communication. This highlights the advantage of randomized identification (RI): as long as secure communication is possible, we can achieve the same identification capacity as in the case without any secrecy constraint.

We showed that the identification and secure identification capacities exhibit the same fundamental behavior as the transmission capacity of the DTPC. Deriving numerical results for these capacity expressions is highly challenging due to the underlying optimization problems. However, several works focus specifically on this optimization challenge. Since we established that the capacity formulas coincide in our setting, the numerical results obtained in those studies directly apply to our case. Therefore, we refer to previous work, such as \cite{4729780, li2025design}, which provides numerical evaluations for special classes of discrete-time Poisson channels.

Future work should investigate deterministic identification augmented with resources that enable randomized encoding. Another fundamental direction is the study of finite blocklength encoding strategies for this model. 

Moreover, in this work, the ISI effect was intentionally neglected to maintain analytical tractability and to highlight the core principles of secure identification. Future extensions will generalize the framework to Poisson molecular channels with ISI and molecular noise, following the approaches in \cite{salariseddigh2023deterministic} and \cite{zhao2025identification}. Such generalization will provide a more comprehensive and realistic description of secure molecular identification systems.

Our work constitutes a first step in this line of research.


\section*{Acknowledgments}
The authors acknowledge the financial support by the Federal Ministry of Research, Technology and Space (BMFTR)
in the programme of “Souverän. Digital. Vernetzt.”. Joint
project 6G-life, project identification number: 16KISK002
and 16KISK263. H. Boche and W. Labidi were further sup-
ported in part by the BMFTR within the national initiative
on Molecurlare Communication IoBNT under Grant
16KIS1988. C. Deppe was further supported in part by the
BMFTR within the national initiative on Post Shannon Com-
munication (NewCom) under Grant 16KIS1005. C. Deppe, V.
Gholamian, Y. Zhao were also supported by the DFG within
the project DE1915/2-1.

\bibliographystyle{IEEEtran}
\bibliography{definitions,references}
\begin{IEEEbiography}[{\includegraphics[width=1in,height=1.25in,clip]{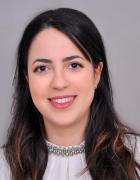}}]
    {Wafa Labidi} (Graduate Student Member, IEEE) was born in Tunis, Tunisia on April 1, 1993. She received the B.Sc. and M.Sc. degrees in electrical engineering from the Technical University of Munich (TUM), Germany, in 2016 and 2019, respectively. She is currently working toward the Ph.D. degree at the Chair of Theoretical Information Technology, TUM. Her research mainly focuses on message identification, common randomness generation and molecular communication.
\end{IEEEbiography}
\begin{IEEEbiography}
[{\includegraphics[width=1in,height=1.25in,clip]{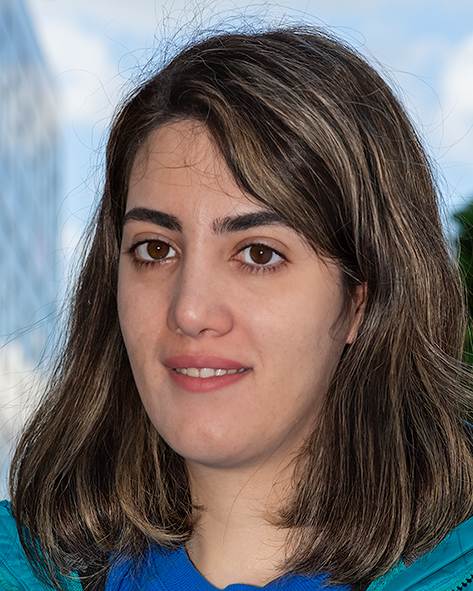}}]
    {Vida Gholamian} was born in Ghazvin, Iran. After completing high school, she developed a strong interest in programming and mathematics, which led her to pursue a Bachelor’s degree in Information Technology Engineering at the Institute for Advanced Studies in Basic Sciences (IASBS) in Zanjan. She completed her Bachelor’s degree in 2019 with a focus on the Internet of Things (IoT). In the same year, she began her Master’s studies in Computer Science at IASBS. Following her graduation, she worked as a Research Assistant in the Wireless Networks Laboratory at IASBS. She later expanded her research interests to include post-Shannon identification and molecular communication. She subsequently joined the Department of Information Theory and Communications Technology at the Institute for Communications Technology (IfN), where she is currently pursuing her Ph.D. as a research assistant.
\end{IEEEbiography}
\begin{IEEEbiography}
[{\includegraphics[width=1in,height=1.25in,clip]{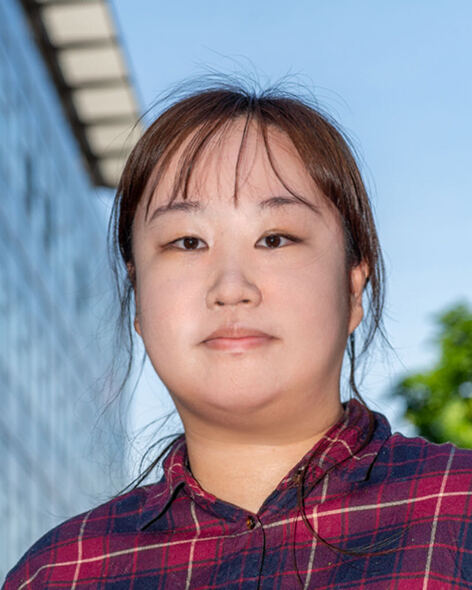}}]
    {Yaning Zhao} was born in Shandong, China. She received her Bachelor’s degree in Telecommunications Engineering from Xidian University (Xi’an University of Electronic Science and Technology) in 2020. She then pursued a Master’s degree in Communications Engineering at Technische Universität München. During her studies, she worked as a part-time research assistant, conducting research on the interpretability of neural networks. Her research interests later shifted toward multiuser information theory and post-Shannon theory, which culminated in her master’s thesis on joint identification and sensing in multiuser systems at the Chair of Theoretical Information Technology. In 2024, she relocated to Braunschweig to continue her academic career at as phd student Technische Universität Braunschweig. 
\end{IEEEbiography}
\begin{IEEEbiography}
[{\includegraphics[width=1in,height=1.25in,clip]{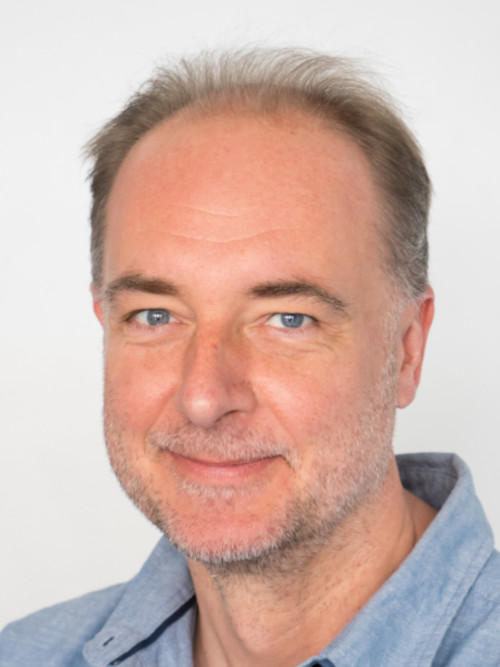}}]
    {Christian Deppe} (Senior Member, IEEE) received his Dipl.-Math. degree in mathematics from the University of Bielefeld in 1996 and his Dr.-math. degree, also from the University of Bielefeld, in 1998. He then worked there until 2010 as a research associate and assistant at the Faculty of Mathematics, Bielefeld. In 2011, he took over the management of the project "Safety and Robustness of the Quantum Repeater" from the Federal Ministry of Education and Research at the Faculty of Mathematics, Bielefeld University, for two years. In 2014, Christian Deppe was funded by a DFG project at the Chair of Theoretical Information Technology, Technical University of Munich. At the Friedrich Schiller University in Jena, Christian Deppe took up a temporary professorship at the Faculty of Mathematics and Computer Science in 2015. Until 2023, he worked for six years at the Chair of Communications Engineering at the Technical University of Munich and since January 2024 has taken on new tasks at the Institute of Communications Engineering at the TU Braunschweig. He is project leader of several projects funded by the BMFTR and the DFG.
\end{IEEEbiography}
\begin{IEEEbiography}
[{\includegraphics[width=1in,height=1.25in,clip]{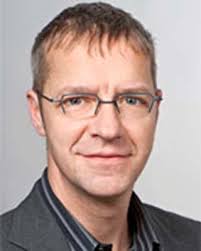}}]
    {Holger Boche} (Fellow, IEEE) received the Dipl.-Ing. degree in electrical
engineering, the degree in mathematics, and the Dr.-Ing. degree in electrical
engineering from the Technische Universität Dresden, Dresden, Germany,
in 1990, 1992, and 1994, respectively, the master’s degree from the Friedrich-
Schiller Universität Jena, Jena, Germany, in 1997, and the Dr.Rer.Nat. degree
in pure mathematics from the Technische Universität Berlin, Berlin, Germany,
in 1998.
In 1997, he joined the Fraunhofer Institute for Telecommunications,
Heinrich-Hertz-Institute (HHI), Berlin. From 2002 to 2010, he was a Full
Professor of mobile communication networks with the Institute for Commu-
nications Systems, Technische Universität Berlin. In 2003, he became the
Director of the Fraunhofer German-Sino Laboratory for Mobile Communica-
tions, Berlin. In 2004, he became the Director of the Fraunhofer Institute for
Telecommunications, HHI. He was a Visiting Professor with ETH Zürich,
Zurich, Switzerland, from 2004 to 2006 (Winter); and KTH Stockholm,
Stockholm, Sweden, in 2005 (Summer). He has been a member and an
Honorary Fellow of the TUM Institute for Advanced Study, Munich, Germany,
since 2014. Since 2018, he has been the Founding Director of the Center for
Quantum Engineering, Technische Universität Muünchen, Munich, Germany.
Since 2021, he has been jointly leading the BMBF Research Hub 6G-Life with
Frank Fitzek. He is currently a Full Professor with the Institute of Theoretical
Information Technology, Technische Universität Muünchen, which he joined
in October 2010. Among his publications is the recent book, Information
Theoretic Security and Privacy of Information Systems (Cambridge University
Press, 2017).
Prof. Boche is a member of the IEEE Signal Processing Society SPCOM
and SPTM Technical Committees. He was an Elected Member of the German
Academy of Sciences (Leopoldina) in 2008 and the Berlin Brandenburg
Academy of Sciences and Humanities in 2009. He was a recipient of
the Research Award “Technische Kommunikation” from the Alcatel SEL
Foundation in October 2003, the “Innovation Award” from the Vodafone
Foundation in June 2006, and the Gottfried Wilhelm Leibniz Prize from the
Deutsche Forschungsgemeinschaft (German Research Foundation) in 2008.
He was a co-recipient of the 2006 IEEE Signal Processing Society Best Paper
Award and a recipient of the 2007 IEEE Signal Processing Society Best Paper
Award. He was the General Chair of the Symposium on Information Theoretic
Approaches to Security and Privacy at IEEE GlobalSIP 2016.

\end{IEEEbiography}
\clearpage
\newpage

\IEEEtriggeratref{4}



\end{document}